\newcommand{\@hideLIPIcs}{}
\newcommand{\defn}[1]{\textbf{#1}}
\renewcommand{\R}{\mathbb{R}}
\newcommand{\Z}{\mathbb{Z}}
\renewcommand{\CH}[1]{\text{CH}(#1)}
\newcommand{\Int}[1]{\text{Int}(#1)}
\newcommand{\remove}[1]{}
\newcommand{\wg}{\widetilde{g}}
\newcommand{\cF}{\mathcal{F}}
\newcommand{\cI}{\mathcal{I}}
\newcommand{\cone}[1]{\text{Cone}(#1)}
\newcommand{\AG}{\textsc{ArtGallery}\xspace}
\newcommand{\BAG}{\textsc{BoundaryArtGallery}\xspace}
\newcommand{\CAG}{\textsc{ContiguousArtGallery}\xspace}
\newcommand{\PS}{\textsc{PolygonSeparation}\xspace}
\newcommand{\PointS}{\textsc{PointSeparation}\xspace}
\newcommand{\SegS}{\textsc{SegmentSeparation}\xspace}
\newcommand{\SegPolyS}{\textsc{SegmentPolygonConvexSeparation}\xspace}
\newcommand{\SegCPS}{\textsc{SegmentConvexPolygonSeparation}\xspace}
\newcommand{\AAC}{\textsc{AnalyticArcCover}\xspace}
\title{The Analytic Arc Cover Problem and its Applications to Contiguous Art Gallery, Polygon Separation, and Shape Carving} %
\titlerunning{The Analytic Arc Cover Problem} %
\author{Eliot W. Robson}{
Department of Computer Science,
University of Illinois Urbana-Champaign
}{
erobson2@illinois.edu
}{%
https://orcid.org/0000-0002-1476-6715%
}{%
}
\author{Jack Spalding-Jamieson}{Independent}{jacksj@uwaterloo.ca}{https://orcid.org/0000-0002-1209-4345}{}
\author{Da Wei Zheng}{
Department of Computer Science,
University of Illinois Urbana-Champaign
}{
dwzheng2@illinois.edu
}{%
https://orcid.org/0000-0002-0844-9457%
}{%
Research supported in part by an NSERC PGSD.
}
\authorrunning{E. Robson, J. Spalding-Jamieson, D.W. Zheng}
\keywords{Art gallery, polygon separation, arc cover, 3D carving, interval set cover} %
\begin{document}

\maketitle

\begin{abstract}
We show the following problems are in \P:
\begin{enumerate}
    \item The contiguous art gallery problem --  a variation of the art gallery problem where each guard can protect a contiguous interval along the boundary of a simple polygon. This was posed at the open problem session at CCCG '24 by Thomas C. Shermer.
    \item The polygon separation problem for line segments -- 
    For two sets of line segments $S_1$ and $S_2$, find a minimum-vertex convex polygon
    $P$ that completely contains $S_1$ and does not contain or cross any segment of $S_2$.
    \item Minimizing the number of half-plane cuts to carve a 3D polytope.
\end{enumerate}
To accomplish this, we study the \emph{analytic arc cover problem} -- an interval set cover problem over the unit circle with infinitely many implicitly-defined arcs, given by a function.
\end{abstract}

\section{Introduction}
\label{sec:introduction}

Many classic problems in computational geometry are minimum covering problems.
One class of examples are
art gallery problems~\cite{ORourke87,Shermer92,Urrutia00,ORourke04}
which asks for the a minimum number star-shaped polygons that cover a given polygon.
Each star-shaped polygon describes a region that can be seen by a single guard. 
Some variants of art gallery allow guards to be placed anywhere,
while others only allow guards to be placed at a finite set of points (such as vertices of a polygon).
Many variants of the former type are $\exists\R$-complete, while variants of the latter type are almost universally in \NP,
and are often in \P.
This distinction is not surprising: 
When there are infinitely many choices for the covering sets 
it is often not clear if the problem is even in \NP.

In this work, we study three problems
with no immediate proofs that they are in any complexity class smaller than $\exists\R$.
The first problem is a variant of the art gallery problem with a restriction that each guard can only be responsible for a contigious region.
The other two problems relate to separation.
We will show that each of these problems can be reduced to a problem
call the \emph{analytic arc cover} problem.
We establish machinery for dealing with some ``well-behaved'' infinite classes of possible covering sets,
allowing us to show all three of our problems are in \P.

\subsection{Art Gallery Problem}

\begin{figure}
\centering
\includegraphics[scale=0.3,page=2]{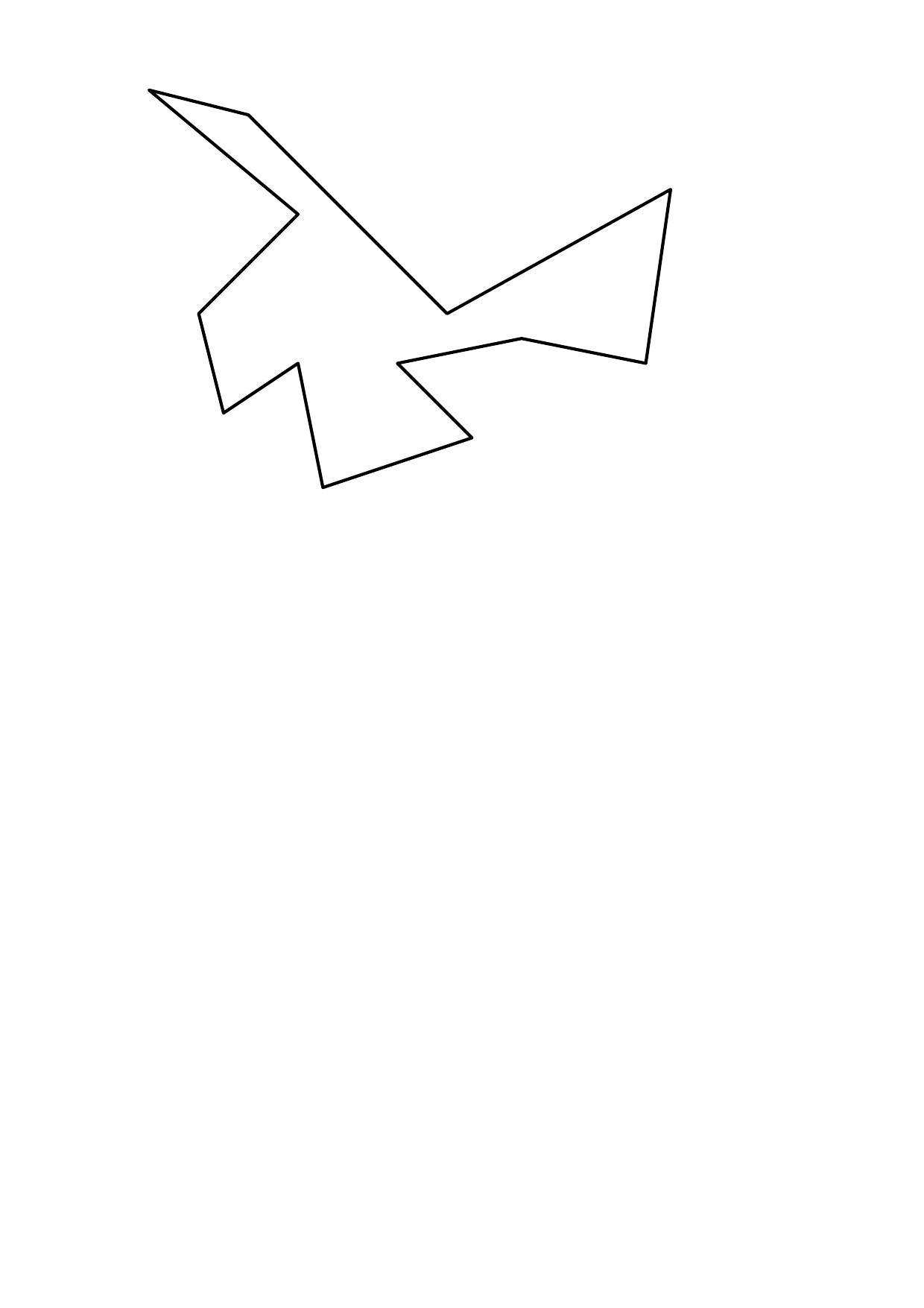}
\hspace{4em}
\includegraphics[scale=0.3,page=4]{contiguous-art-gallery-examples}
\caption{Some examples of \CAG, with optimal solutions.
On the right, it is necessary to place a guard at a non-vertex point.
}
\label{fig:art-gallery-examples}
\end{figure}

Given a simple polygon $P$ and points $x,y \in P$, a guard standing at $x$ sees the point $y$ if the line segment $xy$ is contained in $P$. 
A set $S$ of points is said to \emph{guard} the polygon $P$ if every point of $P$ can be seen by some guard. Minimizing the cardinality of $S$ is known as the \AG problem\footnote{Throughout this paper, we will abuse notation and refer to problems like \AG as both the optimization variant to minimize the set of guards, as well as the decision variant, where we wish to decide if $k$ guards are sufficient.}, and is a very well-studied problem in computational geometry (See one of the numerous surveys dedicated to the problem~\cite{ORourke87,Shermer92,Urrutia00,ORourke04}).
This problem was first formulated by Victor Klee in 1976 (see O'Rourke \cite{ORourke87}). Many variations have been studied since then, such as when the guards are restricted to the boundary or the vertices of the polygon. There are other variants like the \BAG problem where guards are allowed to be anywhere but only the boundary of $P$ needs to be guarded.

All of the aforementioned variants were shown to be \NP-hard \cite{Aggarwal84,LeeL86,Laurentini99}. 
It can be seen that variants that allow for guards in the interior may not even be in \NP, since it was shown that an optimal solution to \AG may need guards at irrational points~\cite{AbrahamsenAM17}.
Further evidence towards this was recently given by 
Abrahamsen, Adamaszek, and Miltzow~\cite{AbrahamsenAM22},
who showed that Victor Klee's original formulation is $\exists\R$-complete.
Subsequent work by Stade~\cite{Stade22} shows that many of the variants of art gallery are also $\exists\R$-complete.

One salient feature of many of the hardness proofs for \AG and its variations is that a single guard may be responsible for disjoint regions of the polygon, even with respect to guarding just the boundary of $P$. In contrast, we can consider the \CAG problem -- a variation of the \BAG problem where each guard can only be responsible for a contiguous section of the boundary, see \Cref{fig:art-gallery-examples}. This problem was described by Thomas C. Shermer at the CCCG '24 open problem session, where he asked the following question:

\begin{center}
\textit{Is the guarding of disjoint regions necessary for the hardness proofs of \AG and variations like \BAG?\\
What is the complexity of \CAG? Is it even in \NP?}
\end{center}

\subsection{Minimum Polygon Separation}

Given a convex polygon $P$ that is contained inside of another convex polygon $Q$, the minimum polygon separation problem  asks for the polygon $S$ with the minimum number of vertices such that $S$ contains $P$ and is contained within $Q$. We denote this problem as \PS.
\PS was first studied by Aggarwal, Booth, O'Rourke, Suri, and Yap~\cite{YapABO89},
who also showed that any minimal $S$ is convex.
This problem is related to the problem of finding a convex polygon with the minimum number of sides that separates two sets of points $S_1$ with $S_2$ that was studied by Edelsbrunner and Preparata~\cite{EdelsbrunnerP88}. We denote this problem as \PointS.

One feature of the algorithm of \cite{YapABO89} is that they work in the real RAM model of computation~\cite{Shamos78}, where it is assumed that elementary operations (like $+$, $-$, $\times$, and $\div$) can be performed in $O(1)$ time on real numbers.
However, the algorithm they present involves iteratively composing many functions together, and ultimately solving an equation. 
While \cite{YapABO89} claim that their algorithm runs in $O(n\log n)$ time when the input consists of $n$ points, it is unclear what the runtime of the algorithm is in other models of computation (e.g. in a Turing Machine model where inputs are points with rational coordinates). This is because their algorithm may compose $O(n)$ different functions, leading to a function with very high bit complexity, while \cite{YapABO89} assume an equation with such a function can be solved in $O(1)$ time.

One can also consider another similar problem:
The problem of separating two collections of line segments by a convex polygon with the minimum number of sides.
We call this problem \SegS.
It can be seen that
\SegS
generalizes the problem of \PS when we wish to separate the edges of outer polygon with
the inner polygon.
Furthermore, the \PointS problem is a special case of \SegS when all segments are of zero length.
To the best of our knowledge, \SegS has not been studied before, and the algorithms of \cite{YapABO89} and \cite{EdelsbrunnerP88} do not immediately work for this problem.

Moreover, while the problems \PS and \PointS were implicitly shown to be in \P\xspace using the algorithms of \cite{YapABO89} and \cite{EdelsbrunnerP88} respectively (with a larger time complexity in the Turing Machine model of computation rather than the real RAM models used in those papers),
the same techniques do not immediately apply to \SegS, 
so it is not immediately clear that \SegS is even in \NP.

\subsection{Polytope Carving}

\PS is related to the problem of \emph{carving} one shape out of another.
One could also carve a shape out of an arbitrary containing shape,
which is equivalent to asking for a set of cuts separating the shape from the ambient space.
Two-dimensional carving was first studied by Overmars and Welzl \cite{OvermarsW85}, where they aimed to find the cheapest sequence of line cuts to carve out a convex polygon out of a piece of flat material. 
It has also been studied in the context of
rays cuts~\cite{DaescuL06, Tan05} and line segments cuts~\cite{DemaineDK01, DumitrescuH13}. 

Three dimensional versions of carving have also been studied in the form of plane cuts \cite{AhmedHI11}, line cuts \cite{JaromczykK03}, half-plane cuts \cite{RobsonSJZ2024}, and ``sweeping'' ray cuts~\cite{RobsonSJZ2024}.
The work of Robson, Spalding-Jamieson and Zheng~\cite{RobsonSJZ2024} on half-plane cuts left open the question of computing the minimum number of half-plane cuts needed to carve out a specified 3D polytope from the ambient space of $\R^3$.

\subsection{Our Contribution}
We answer the question posed by Thomas C. Sherman with the following theorem, showing that adding contiguity constraints for \BAG makes the problem significantly simpler.
We also show that \SegS is in \P.
\begin{theorem}
\label{thm:P-cag}
    \CAG is in \P.
\end{theorem}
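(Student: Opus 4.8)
The plan is to reduce \CAG to the analytic arc cover problem \AAC and then invoke its membership in \P. First I would parametrize the boundary $\partial P$ by a cyclic arc-length parameter, identifying $\partial P$ with the unit circle $S^1$. For $a,b \in S^1$ let $\gamma[a,b] \subseteq \partial P$ denote the counterclockwise boundary sub-chain from $a$ to $b$, and call the arc $[a,b]$ \emph{feasible} if there is a single point $g \in P$ that sees every point of $\gamma[a,b]$ (that is, $\overline{gq} \subseteq P$ for all $q \in \gamma[a,b]$). A minimum contiguous guarding of $\partial P$ is then exactly a minimum set of feasible arcs covering $S^1$: each guard may be charged with one contiguous visible sub-chain, and conversely any feasible arc is witnessed by a placeable guard. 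Since every sub-arc of a feasible arc is feasible, the family of feasible arcs is downward closed, hence fully described by the single function $f\colon S^1 \to S^1$ with $f(a) = \sup\{\, b : [a,b]\ \text{feasible}\,\}$ (the supremum is attained by compactness of $P$ together with the finite-intersection property), and $[a,b]$ is feasible iff $b$ lies in the cyclic interval $[a,f(a)]$. Exhibiting $f$ in the form required by \AAC therefore immediately places \CAG in \P.

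The technical core is the analysis of $f$. For fixed $a$ and $b$ the set of admissible guards is $K(a,b) = \bigcap_{q \in \gamma[a,b]} V(q)$, where $V(q)$ is the visibility polygon of $q$ inside $P$; equivalently $K(a,b)$ is the intersection, over the edges $e$ of the chain $\gamma[a,b]$, of the ``fully sees $e$'' regions, each of which is $P$ with the shadows cast behind its reflex vertices relative to $e$ removed --- a region of $O(n)$ complexity bounded by lines through pairs of polygon vertices. As $b$ increases, $K(a,b)$ only shrinks, and $f(a)$ is the largest $b$ with $K(a,b) \neq \emptyset$. I would argue that, as the pair $(a,b)$ varies, the combinatorial description of $K(a,b)$ --- which vertices and edges of the chain contribute which constraints, and which of those are tight --- changes only at polynomially many critical values, and that on each resulting cell $f$ is the solution of a fixed small system of polynomial equations in $a$, namely the system expressing that the shrinking region $K(a,\cdot)$ degenerates to a point or a segment. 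Hence $f$ is piecewise algebraic of bounded degree with $\mathrm{poly}(n)$ pieces, and the breakpoints together with the defining equations on each piece are computable with polynomially bounded bit-complexity --- precisely the kind of input \AAC accepts. One also checks that $f(a) > a$ strictly for every $a$ (a guard placed at a boundary point always sees a short arc around it), so only finitely many arcs are ever needed and the reduction is to a genuine \AAC instance.

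The main obstacle I expect is this geometric analysis of $f$: carefully enumerating the events at which $K(a,b)$ changes combinatorially as $b$ sweeps a full turn, bounding their number by a polynomial, and verifying that at the critical value $b = f(a)$ the region collapses in a way captured by a single analytic branch rather than by an uncontrolled limit. Handling reflex vertices of $P$, chains that wrap far around the boundary, and the interaction of shadow regions cast by distinct edges is where the details are delicate. Once $f$ is in the required form, the cyclic covering --- including the choice of where the greedy solution ``wraps around'' --- is dispatched entirely by the \AAC machinery, yielding \Cref{thm:P-cag}.
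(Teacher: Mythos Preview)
Your high-level plan matches the paper exactly: parametrize $\partial P$ as $S^1$, define the next-generator $f$, and hand it to \AAC. The gap is in the form you claim for $f$. You assert that $f$ is ``piecewise algebraic of bounded degree'' and that this is ``precisely the kind of input \AAC accepts.'' It is not. The \AAC machinery in \cref{thm:composition_lrf} is built specifically for piecewise \emph{linear rational} functions, i.e.\ each piece a M\"obius map $t\mapsto(at+b)/(ct+d)$. The reason is closure under composition (\cref{lemma:bit_complexity}): linear rational functions compose to linear rational functions with only additive growth in bit complexity, which is what keeps $\wg^{(m)}$ under control in \cref{lemma:comp-lifted}. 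General degree-$d$ algebraic pieces compose to degree $d^2$, hence degree $d^k$ after $k$ iterations, and the algorithm of \cref{thm:slow-alg} no longer runs in polynomial time. So your reduction, as written, does not land in a strongly computable family.

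Establishing that $f$ is piecewise linear rational is the entire technical content of \cref{sec:art-gallery}. Your description of $f(a)$ as ``the value of $b$ at which the shrinking region $K(a,b)$ degenerates to a point'' is correct, but solving for that collapse as a function of $a$ would, a priori, yield a root of a polynomial system---exactly the bounded-degree algebraic expression you anticipate, not a M\"obius map. The paper's sequence \cref{lemma:restricted-toy}--\cref{lemma:restricted-toy-3} circumvents this by showing that on each piece the optimal guard lies on a line through two \emph{fixed} points (polygon vertices or endpoints of a fixed visibility-region edge), so that both the guard location as a function of $t$ and the subsequent projection onto the target edge are individually linear rational, and hence so is their composition. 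You correctly identify this step as ``the main obstacle''; it is in fact the whole theorem, and the specific linear-rational structure is not something your $K(a,b)$ outline would recover without essentially reproducing that case analysis.
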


\begin{theorem}
\label{thm:P-seg-s}
    \SegS is in \P.
\end{theorem}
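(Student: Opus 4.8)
The plan is to reduce \SegS to \AAC and to invoke our polynomial-time algorithm for \AAC.

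We begin with preliminaries. A convex polygon contains $S_1$ iff it contains $A:=\CH{S_1}$, so we work with $A$; if some segment of $S_2$ meets $A$ the instance is infeasible, and otherwise a feasible solution exists (a sufficiently thin polygon around $A$ avoids $S_2$). We dispose of the degenerate cases — $A$ lower-dimensional, or optimum with at most two vertices — by direct inspection, so assume $A$ is a full-dimensional convex polygon disjoint from $S_2$ and the optimal separating polygon $P$ has at least three vertices. Then $P$ may be taken \emph{circumscribed} about $A$: sliding a non-supporting edge inward, in its own direction, until it touches $A$ only shrinks $P$, preserving $A\subseteq P$ and disjointness from $S_2$, and — as the edge normals do not change — keeps $P$ bounded and never increases its vertex count. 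Such a polygon is encoded by the cyclic list of outward edge normals $\theta_1<\dots<\theta_k$ on $S^1$, the $i$-th edge lying on the supporting line $\ell(\theta_i)$ of $A$; boundedness forces each gap $\theta_{i+1}-\theta_i$ below $\pi$.

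The structural core is a \emph{pocket} decomposition. Writing $H(\theta)$ for the supporting half-plane of $A$ in direction $\theta$, the region $P\setminus A$ is the disjoint union, over consecutive pairs of normals, of the bounded component of $\bigl(H(\theta_i)\cap H(\theta_{i+1})\bigr)\setminus A$; no other edge clips this component, because the chosen edges appear around $P$ in normal order, so each $v_i=\ell(\theta_i)\cap\ell(\theta_{i+1})$ is a genuine vertex. Hence $P$ avoids $S_2$ iff every pocket does. One checks that, for fixed $\theta$, this pocket grows monotonically as the second normal increases over $(\theta,\theta+\pi)$, so that
\[
  \mathrm{next}(\theta)\;:=\;\sup\bigl\{\phi\in(\theta,\theta+\pi]\ :\ \mathrm{Pocket}(\theta,\phi)\cap s=\emptyset\text{ for all }s\in S_2\bigr\}
\]
is well defined, and $\mathrm{next}(\theta)>\theta$ everywhere since $A$ is at positive distance from $S_2$. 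Consequently a bounded circumscribed polygon with normals $\theta_1<\dots<\theta_k$ avoids $S_2$ exactly when $\theta_{i+1}$ lies strictly between $\theta_i$ and $\mathrm{next}(\theta_i)$ for every $i$ (cyclically), and any such cyclic sequence conversely yields a valid separating polygon. Therefore the minimum number of vertices of a separating polygon equals the minimum length of such a cyclic sequence — that is, the fewest arcs of the implicit family $\{\,[\theta,\mathrm{next}(\theta))\ :\ \theta\in S^1\,\}$ that chain around $S^1$ — which is precisely an instance of \AAC given by the function $\mathrm{next}$ (and is automatically at least $3$, as every step is below $\pi$).

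Finally, and this is where the real work lies, one must verify that $\mathrm{next}$ is an admissible input to \AAC. The idea is that $\mathrm{next}(\theta)=\min_{s\in S_2}\phi^{*}_{s}(\theta)$, where $\phi^{*}_{s}(\theta)$ is the value of $\phi$ at which the growing pocket first touches $s$, and that this first contact is always an event in which $\ell(\phi)$, the apex $\ell(\theta)\cap\ell(\phi)$, or a vertex of $A$ coincides with an endpoint of $s$ or with $\ell(\theta)\cap(\text{line of }s)$ — each an algebraic condition in $\phi$ (and $\theta$) of bounded degree and polynomially bounded bit-length. Since the vertices of $A$ split $S^1$ into $O(|S_1|)$ arcs on which the relevant formula is fixed, and each $s\in S_2$ contributes a constant number of candidate events, $\mathrm{next}$ is piecewise algebraic with polynomially many pieces of bounded complexity — the kind of well-behaved function \AAC is built to accept. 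I expect the main obstacle to be exactly this case analysis: showing that ``first contact'' always has one of the listed forms, and that every algebraic number produced stays polynomially bounded, so that the whole reduction runs in polynomial time in the Turing model (the concern raised in the introduction). With $\mathrm{next}$ in hand, our \AAC algorithm then shows \SegS is in \P.
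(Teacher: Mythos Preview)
Your reduction is structurally the same as the paper's: parametrize the edges of the separating polygon by their outward normals on $S^1$, define a greedy ``next'' function, and feed it to \AAC. The pocket decomposition and the monotonicity argument are fine, and the equivalence with a minimum cyclic chain of arcs is exactly what the paper obtains.

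The gap is in the last paragraph, and it is not just the missing case analysis --- it is that you are aiming at the wrong target. You conclude that $\mathrm{next}$ is ``piecewise algebraic with polynomially many pieces of bounded complexity --- the kind of well-behaved function \AAC\ is built to accept.'' But the paper's \AAC\ machinery (\cref{thm:composition_lrf}) is proved \emph{only} for piecewise \emph{linear rational} functions, not for general bounded-degree algebraic pieces. The whole point of that restriction is that linear rational functions are closed under composition with no growth in degree (think M\"obius transformations), so $g^{(k)}$ stays linear rational and the bit complexity grows only polynomially (\cref{lemma:bit_complexity}, \cref{lemma:comp-lifted}). For generic algebraic pieces, composing $k$ times can blow the degree up exponentially, which is precisely the obstacle you flagged from~\cite{YapABO89}. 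So ``piecewise algebraic of bounded degree'' does not suffice to invoke the \AAC\ algorithm.

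Relatedly, your choice to parametrize by the \emph{angle} $\theta$ makes it unlikely that $\mathrm{next}$ is linear rational at all: the coordinates of $\ell(\theta)$, of the apex $\ell(\theta)\cap\ell(\phi)$, and of the contact events all involve $\cos\theta,\sin\theta$, which are not rational in $\theta$. The paper avoids this by working in the \emph{ray representation} of $S^1$ (Section~2.3): a direction is stored as any nonzero vector on its ray, and the next-generator is a \emph{two-dimensional} linear rational map $\overline{x}\mapsto A\overline{x}/C\overline{x}$ with $\overline{b}=\overline{d}=0$. In that representation the ``next half-plane from a given half-plane through a fixed tangent vertex and a fixed segment'' is genuinely linear rational in the input normal (see \cref{fig:next-half-plane-formula}), and the pieces are delimited by the tangent lines through segment endpoints and polygon vertices. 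That is the step your proof is missing: not merely the case analysis of first-contact events, but the choice of representation that makes each case a linear rational map, so that \cref{thm:composition_lrf} applies.
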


We show that a question about minimizing the number of half-plane cuts to carve a 3D polytope, reduces to multiple instances of \SegS. By the theorem above, we can conclude that finding the minimum cuts to carve a polytope is also in \P.

\begin{theorem}
\label{thm:P-3d}
Minimizing the number of half-plane cuts to carve a 3D polytope
is in \P.
\end{theorem}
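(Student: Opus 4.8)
The plan is to reduce the problem to polynomially many instances of \SegS and then apply \Cref{thm:P-seg-s}. Let $Q$ be the target polytope, with facets $F_1,\dots,F_m$, where $F_i$ lies in the supporting plane $H_i$ and has outward unit normal $n_i$. The first step is a structural characterization: building on the half-plane-cut framework of \cite{RobsonSJZ2024}, I would argue that an optimal carving decomposes facet by facet. The key point is that a valid half-plane cut must avoid the interior of $Q$, and any cut responsible for removing material lying just outside the relative interior of a facet $F_i$ --- points $p + t\,n_i$ with $p$ in the relative interior of $F_i$ and $t>0$ small --- must lie in the plane $H_i$ itself, since that is the only plane through such points disjoint from the interior of $Q$. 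Hence the cuts partition (up to lower-dimensional coincidences) according to which facet plane $H_i$ contains them, and the total cost is the sum over $i$ of the number of cuts lying in $H_i$.

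The next step is to analyze a single facet $F_i$. A family of half-plane cuts lying in $H_i$ leaves, as the trace in $H_i$ of the carved solid, the convex region $P_i$ obtained as the intersection of the ``kept'' side of each cut, and the number of cuts is at least the number of edges of $P_i$, and this bound is attained. Validity of the carving forces $F_i \subseteq P_i$ (no cut bites into $Q$) together with the requirement that $P_i$ not cross a certain collection $\Sigma_i$ of line segments in $H_i$ derived from the other facets of $Q$ (their traces and shadows in $H_i$), so that every point outside $Q$ is removed by the cuts of some facet. These segments need not bound a convex region, so the minimum number of cuts charged to $F_i$ is exactly the \SegS optimum with $S_1$ the edges of $F_i$ and $S_2 = \Sigma_i$ --- not merely a \PS optimum. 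Identifying $H_i$ with $\R^2$ via a rational affine map, this instance has size polynomial in the input, with rational coordinates of polynomial bit-length.

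Assembling the pieces would then be routine: there are $m \le n$ facets, each producing one polynomial-size \SegS instance, each solvable in polynomial time by \Cref{thm:P-seg-s}, and the minimum number of half-plane cuts to carve $Q$ equals the sum of the $m$ resulting optima; hence the problem is in \P.

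I expect the main difficulty to lie in the first two steps: proving that the per-facet subproblems genuinely decouple, so that the segment set $\Sigma_i$ can be read off from $Q$ alone with no circular dependence on the polygons chosen for the other facets, and correctly accounting for cuts lying in the plane of an edge of $Q$. Degenerate inputs --- parallel or coplanar facets, collinear vertex incidences --- will also need separate care. The remaining point to check is that the per-facet problem really is \SegS rather than a near-variant; this is precisely where the extra generality of the segment version over \PS gets used.
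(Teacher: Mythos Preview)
Your high-level plan---partition the cuts by facet plane, reduce each plane to a \SegS instance, then sum the optima---matches the paper's approach. But your per-plane instance is set up with the roles inverted. You assert $F_i\subseteq P_i$ (``no cut bites into $Q$''), yet the facet $F_i$ lies on $\partial Q$, not in $\Int{Q}$; in fact every point of $\partial Q$ must lie on some cut half-plane, or else $\Int{Q}$ would fail to be a full connected component of $\R^3\setminus\bigcup_{H\in\mathcal H}H$. Hence the cuts in $H_i$ must \emph{cover} $F_i$, placing $F_i$ in the complement of $P_i$, not inside it. What ``no cut enters $\Int{Q}$'' actually forces to lie inside $P_i$ is $\Int{Q}\cap H_i$---the cross-section of the interior through that plane, which is nonempty precisely in the non-convex cases where the problem is interesting. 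The paper's per-plane instance is accordingly: inner polygon $\CH{\Int{Q}\cap L}$, outer segments the edges of all (triangulated) faces lying in $L$. Your $\Sigma_i$ built from ``traces and shadows of other facets'' never enters.

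With the roles corrected, your anticipated difficulty evaporates. The characterization theorem of~\cite{RobsonSJZ2024} (restated here as \Cref{thm:half-plane-cutting-equivalence}) says a set of half-planes carves $Q$ if and only if, plane by plane, the coplanar cuts cover all faces in that plane while avoiding $\CH{\Int{Q}\cap L}$. These conditions are independent across planes, so the subproblems decouple and no circular dependence among your $\Sigma_i$ can arise. Coplanar facets are handled by grouping them into a single instance per plane, as the paper does, rather than one instance per facet.
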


To prove our results, we reduce both \SegS and \CAG to a problem we call \AAC.
This problem is a version of the interval cover problem on a circle with infinite number of intervals. Each interval is described by the counterclockwise segment between two points $a$ and $b$ on the circle $S^1$. We denote this as an half-open interval $[a, b)$, containing $a$, and not containing $b$, and call this an \defn{arc}. This infinite set of intervals is given implicitly as a function, as can be seen in the following definition.
\begin{definition}
Let $g:S^1\to S^1$ be a function that maps points on the unit circle $S^1$ to other points on the unit circle $S^1$. 
\AAC asks: given $g$, find the minimum set $X\subset S^1$ such that the set of counter-clockwise arcs $\{[x,g(x)):x\in X\}$
covers $S^1$.
\end{definition}

We show that \CAG reduces to \AAC with piecewise linear-rational functions, and \SegS reduces to \AAC with a two dimensional analogue. We crucially use the fact that a composition of two linear rational functions yields another linear rational function.

\subsection{Concurrent work}

Two other groups also concurrently investigated the \CAG problem.

Merrild, Rysgaard, Schou, and Svenning~\cite{MerrildRRS24} give a polynomial time algorithm for the \CAG problem in the real RAM model of computation, but do not bound the bit complexity of the intermediate numbers produced by the algorithm. They posed the question of membership in \P\ as an open problem, which our results address.
They also posed the question of an algorithm for polygons with holes, to which our methods extend (see \cref{rmk:holes}).

Biniaz, Maheshwari, Mitchell, Odak, Polishchuk, and Shermer~\cite{biniaz2024contiguousboundaryguarding}
also provide a polynomial time algorithm for the \CAG problem.
In particular, like us, they were able to provide one running on a Turing machine, implying membership in \P.
Their approach is more combinatorial than ours, although aspects of their proof bear similarities to ours.

In our work, we give a framework for solving a larger class of problems in polynomial time on a Turing machine (i.e., membership in \P), including the \CAG problem.

Since the initial preprint of our paper, all three works (including ours) have been presented as a combined paper~\cite{combinedsocgpaper}, which also led to some improved insights among the three sets of results.

\subsection{Organization}

The remainder of this paper is organized as follows.
In \Cref{sec:arc-cover}, we formally define and then solve the \AAC problem.
In \Cref{sec:art-gallery}, we reduce the \CAG problem to the \AAC problem, proving \Cref{thm:P-cag}.
Similarly, in \Cref{sec:sep}, we reduce the \SegS problem to the \AAC problem, proving \Cref{thm:P-seg-s}. 
Finally, in \Cref{sec:carving}, we prove \Cref{thm:P-3d} via a reduction to \SegS, leveraging \Cref{thm:P-seg-s}.

\section{The Analytic Arc Cover Problem}
\label{sec:arc-cover}

Throughout the paper we will work with $S_1$ the unit circle centered at the origin. For points $a,b$ in $S_1$, we will use the notation $[a,b)$ to denote the half-open counter-clockwise arc from $a$ to $b$ that includes $a$ but does not include $b$ (the choice of orientation is arbitrary).

\subsection{Warmup}

We begin with a definition of the interval cover and arc cover problems.
\begin{definition}
Let $\mathcal I$ be a set of $n$ half-open intervals of the form $[a, b)$ covering the unit interval $[0,1)$.
The \textsc{IntervalCover} problem asks for the minimum subset of intervals $I\subset\mathcal I$
that can cover $[0, 1)$.
\end{definition}
\begin{definition}
Let $\mathcal I$ be a set of $n$ half-open arcs covering the unit circle $S^1$.
The \textsc{ArcCover} problem asks for the minimum subset of arcs $I\subset\mathcal I$
that can cover $S^1$.
\end{definition}

While the set cover problem is NP-Hard~\cite{GareyJ79}, the interval set cover problem and the arc cover problem are each solvable in $O(n\log n)$ time~\cite{clrs, hsu1991linear}.
In particular, there is a folklore greedy algorithm for \textsc{IntervalCover} that repeatedly looks for the leftmost uncovered point $p$, and finds the interval $[a,b)$ that covers $p$ and has the maximum value of $b$.
This same algorithm can be applied to \textsc{ArcCover} if one fixes a branch cut (i.e., starting point),
although it only achieves an additive approximation:
\begin{lemma}
Let $\mathcal I$ be an instance of \textsc{ArcCover} where the minimum arc cover has size $k$. Fixing a point $p\in S^1$ and running the greedy algorithm counter-clockwise from $p$ yields a solution with at most $k+1$ intervals.
\end{lemma}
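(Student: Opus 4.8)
The plan is to compare the greedy solution against an optimal one directly, arguing that each greedy arc covers at least as much ``new ground'' as the corresponding arc in a suitably aligned optimal cover, so greedy cannot fall behind by more than one arc. First I would fix the starting point $p$ and let $I_{\text{OPT}}=\{A_1,\dots,A_k\}$ be a minimum cover of size $k$. Cyclically relabel the optimal arcs so that, reading counter-clockwise from $p$, they are encountered in order; in particular we may assume $A_1$ is an arc whose interior contains $p$ or whose left endpoint is the first optimal left endpoint at or after $p$ (if no optimal arc contains $p$ then $p$ is uncovered, contradicting that $I_{\text{OPT}}$ is a cover, so some $A_1$ with $p\in A_1$ exists). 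Cut the circle open at $p$ to obtain the interval $[0,1)$ (with $p\mapsto 0$), so that the optimal arcs become a collection of intervals covering $[0,1)$ except that $A_1$ may ``wrap around'' — but since $p\in A_1$, the portion of $A_1$ on $[0,1)$ is an honest prefix $[0,r_1)$ for some $r_1>0$.

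Next I would run the standard greedy-coverage induction on the opened-up interval. Let $g_1,g_2,\dots$ be the arcs chosen by the greedy algorithm, and let $c_i$ be the right endpoint (measured along the branch cut, i.e.\ in $[0,1]$) of the union $g_1\cup\cdots\cup g_i$; set $c_0 = 0$. The key claim is the usual exchange inequality: for every $i\ge 1$, the greedy prefix after $i$ steps reaches at least as far as the optimal prefix after $i$ arcs, i.e.\ $c_i \ge d_i$, where $d_i$ is the right endpoint of $A_1\cup\cdots\cup A_i$ along the branch cut. This follows by induction: at step $i$ greedy picks the arc maximizing the right endpoint among all arcs containing the leftmost uncovered point $c_{i-1}$; since $c_{i-1}\ge d_{i-1}$ by induction and the optimal arcs $A_1,\dots,A_i$ cover $[0,d_i)$, some optimal arc $A_j$ with $j\le i$ contains the point $c_{i-1}$ (as $c_{i-1}<d_i$, using that otherwise greedy would already be done), and that arc's right endpoint is at most $d_i$ but also a candidate for greedy, so $c_i$ is at least the right endpoint of $A_j$; a short case analysis pushes this up to $c_i\ge d_i$. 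The only subtlety is the wraparound arc $A_1$: because $p$ lies in its interior, $A_1$ restricted to the cut interval is exactly the prefix $[0,r_1)$, so it behaves as an ordinary interval for the purposes of this induction and no arc is ``split'' into two pieces.

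Finishing the argument: since $I_{\text{OPT}}$ covers $S^1$, we have $d_k = 1$ (the optimal arcs, opened at $p$, reach the end of the interval — again using that $A_1$ contributes the prefix and the remaining $A_i$ are ordinary intervals, together covering $[0,1)$). Hence $c_k \ge d_k = 1$, so after $k$ greedy arcs the entire interval $[0,1)$ — equivalently all of $S^1$ minus possibly the single point $p$ — is covered. One additional greedy arc is an arc containing $p$, so it covers $p$ as well; therefore greedy terminates after at most $k+1$ arcs. I expect the main obstacle to be handling the branch-cut bookkeeping cleanly: making precise that ``reaching position $1$ along the cut'' is equivalent to covering all of $S^1\setminus\{p\}$, that the wraparound optimal arc does not need to be counted twice, and that greedy's notion of ``leftmost uncovered point'' on the circle coincides with the linear one after cutting. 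None of this is deep, but it is the part where an off-by-one can silently creep in, which is exactly why the bound is $k+1$ rather than $k$.
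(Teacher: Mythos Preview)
Your approach is essentially the paper's: compare greedy to an optimal cover via the standard exchange argument (the $j$-th greedy arc reaches at least as far counter-clockwise as the $j$-th optimal arc, ordered from $p$), then spend one extra arc on the wraparound. The paper's proof is a terse three-sentence version of this.

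The one concrete slip is exactly where you predicted. Your claim $d_k = 1$ fails: once you cut at $p$ and replace $A_1$ by only its prefix $[0, r_1)$, you have discarded the suffix $[s_1, 1)$ of $A_1$, and the remaining arcs need not cover that suffix. For instance with $p=0$, $A_1=[0.9,0.1)$, $A_2=[0.05,0.5)$, $A_3=[0.4,0.95)$ one gets $d_3=0.95<1$. (Also, nothing rules out more than one optimal arc containing $p$ in a minimum cover, so ``the remaining $A_i$ are ordinary intervals'' is not justified as stated.) The paper sidesteps this by never claiming greedy reaches all the way around after $k$ steps; instead it observes directly that $I_1 \cup I_1' \cup \cdots \cup I_k'$ covers $S^1$---equivalently, $c_k$ reaches at least the left endpoint $s_1$ of $I_1$---so one further greedy step, for which $I_1$ itself is a candidate, finishes. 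Your induction already yields $c_k \ge d_k$, and correct bookkeeping gives only $d_k \ge s_1$ rather than $d_k=1$; that weaker bound is precisely what the extra arc in $k+1$ is paying for.
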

\begin{proof}
Let $I_1, \dots, I_k$ be an optimal solution to the arc cover problem with $p\in I_1$. Let $I_1', \dots, I_\ell'$ be the solution found by the greedy algorithm starting at $p$.
Observe that by the greedy algorithm has the counter-clockwise point of $I_j'$ farther than the counter-clockwise point of $I_j$.
Taking the interval $I_1$ along with $I_1',\dots I_k'$ would cover all of $S^1$. Thus $\ell \le k+1$.
\end{proof}

\subsection{Formulating the Analytic Arc Cover Problem}

In this section we will generalize \textsc{ArcCover} to carefully chosen settings where the set of arcs $\cI$ can be infinite.
Given a set of (possibly infinte) arcs $\mathcal I$, we can define the \defn{next-generator} $g:S^1\to S^1$ that takes as input a point $t\in S^1$ computes the ``farthest'' interval that covers $t$ with respect to the branch cut at $a$:
\[ g(t) = \sup\{ b \mid [a,b)\in \mathcal I \text{ and } t\in [a, b)\}\]
For simplicity and to avoid pathological examples, in this paper we will always assume that $\mathcal I$ has the property that for every $t$, there is some interval that contains $[t, g(t))$.
That is, the $\text{sup}$ can be replaced with a $\text{max}$.
This can be thought of as analogous to a completeness property in analysis.

The core problem we address is what we call the \AAC problem. In its most general form, the problem is as follows:
\begin{definition}
Let $\mathcal I$ be a set of (possibly infinite) half-open arcs covering the unit circle $S^1$ with a next-generator $g$.
The \AAC problem asks for the minimum integer $k$ for which there exists an $x$ where:
\[[x, g(x))\cup [g(x), g(g(x)))\cup \cdots \cup [g^{(k-1)}(x),g^{(k)}(x)) = S^1.\]
\end{definition}

This definition coincides with solving \textsc{ArcCover} using the (possibly infinite) set of intervals $\cI$.
To actually discuss algorithms for this problem, it will be helpful to work in a particular space homeomorphic to $\mathbb{R}$,
rather than $S^1$.
Our space of choice is $\Z \times S^1$,
equipped with a canonical branch cut (i.e. the positive $x$ axis).
This space has a natural ordering, lexicographical with counterclockwise ordering relative to the branch cut.
A next-generator function $g:S^1\to S^1$
can be \defn{lifted}
to a function $\wg:\Z \times S^1\to \Z \times S^1$
so that
$\wg(z,x)=\left(z+1,g(x)\right)$ if $[x,g(x))$ contains the branch cut point,
and
$\wg(z,x)=\left(z,g(x)\right)$ otherwise.
We call $\wg$ the \defn{lifted next-generator}.
See \cref{fig:lifted-functions} for a visualization.

\begin{figure}[ht]
\centering
\includegraphics[scale=0.8,page=1]{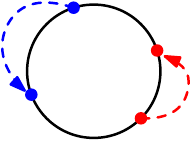}
\hspace{3em}
\includegraphics[scale=0.6,page=6]{lifted-functions}
\caption{Two mappings of a next-generator and the same mappings in the lifted next-generator.}
\label{fig:lifted-functions}
\end{figure}

Observe a next-generator $g$ has the property that for any $t_1, t_2\in S^1$ where $t_2\in [t_1, g(t_1))$, then $g(t_2) \not \in [t_1, g(t_1))$.
The lifted next-generator $\wg$ is in fact monotonic with our choice of lexigraphical ordering on $\Z\times S^1$.
Furthermore, the lifted next-generator has the feature that $\wg(z, x) \le (z+1, x)$. 
We call functions from $\Z\times S^1 \to Z\times S^1$ that have this property \defn{proper} functions.

In our applications, rather than directly working with the infinite set $\mathcal{I}$, we will be able to compute a compact representation of the next-generator,
which will also induce a straightforward representation of the lifted next-generator.
The possible forms of the next-generator function can be quite varied.
Typically, one would want an oracle to evaluate the next-generator function,
although in our case that would only allow us to get an additive $+1$ approximation to the optimal solution.
Instead, we will require that our lifted next-generator function
comes from a family of functions $\cF$ with the following properties:
\begin{enumerate}
    \item $\mathcal F$ is closed under composition.
    \item One can perform an \defn{existential threshold test}: For a function $f\in\mathcal F$, one can check if there exists $x\in S^1$ such that $f(0,x)\geq (1,x)$ (as points on $\Z\times S^1$), and output a representation of that point as a witness. In other words, we can test if $f$ is a proper function.
\end{enumerate}
We will call such a function family \defn{strongly computable},
although we note that this is not a generalization of a function family with an evaluation oracle.

\begin{theorem}
\label{thm:slow-alg}
Let $\cF$ be a strongly computable function family,
and let $\wg\in\cF$ be a lifted next-generator for some set of intervals $\cI$.
Let $I^*$ be a minimum cover of $\mathcal I$ of size $k$.
Then \AAC on $\cI$ can be solved in $k-1$ function compositions and $k$ existential threshold tests.
\end{theorem}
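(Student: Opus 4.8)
The plan is to reduce the optimization question to a monotone predicate in the number of arcs, and to evaluate that predicate using iterated compositions of $\wg$ together with the existential threshold test.

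\medskip\noindent\textbf{Step 1: A dictionary between greedy chains and the lift.} First I would record that for a point $x\in S^1$ and an integer $j\ge 1$, the $j$ consecutive ``greedy'' arcs $[x,g(x)),[g(x),g^{(2)}(x)),\dots,[g^{(j-1)}(x),g^{(j)}(x))$ cover $S^1$ if and only if $\wg^{(j)}(0,x)\ge (1,x)$ in $\Z\times S^1$. This is just unwinding the definition of the lift: the projection of the lifted arc $[\wg^{(i)}(0,x),\wg^{(i+1)}(0,x))$ to $S^1$ is exactly $[g^{(i)}(x),g^{(i+1)}(x))$, these lifted arcs form a contiguous chain from $(0,x)$ to $\wg^{(j)}(0,x)$, and (using monotonicity of $\wg$ and the bound $\wg(z,p)\in((z,p),(z+1,p)]$) such a chain projects onto all of $S^1$ precisely when its endpoint has advanced to at least $(1,x)$ and onto a proper sub-arc otherwise.

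\medskip\noindent\textbf{Step 2: A greedy-exchange lemma.} Next I would show that if $\cI$ admits a cover of size $k$, then there is a point $x$ for which the $k$ greedy arcs from $x$ cover $S^1$. Take an optimal cover and list its arcs $I_1,\dots,I_k$ in counterclockwise cyclic order so consecutive arcs overlap and $I_k$ wraps back past the left endpoint $a_1$ of $I_1$; lift everything to $\Z\times S^1$ with $(0,a_1)$ the lift of $a_1$, and set $x=a_1$. By induction on $j$ one shows that $\wg^{(j)}(0,x)$ dominates the lifted left endpoint of $I_{j+1}$ (interpreting $I_{k+1}$ as the arc based at $(1,a_1)$): the base case is $\wg^{(0)}(0,x)=(0,a_1)$, and for the step, monotonicity gives $\wg^{(j+1)}(0,x)\ge \wg(\text{lifted left endpoint of }I_{j+1})$, which is at least the lifted right endpoint of $I_{j+1}$ because $g$ returns the \emph{farthest} arc covering that left endpoint, and this in turn dominates the lifted left endpoint of $I_{j+2}$ since $I_{j+1}$ and $I_{j+2}$ overlap. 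Here the completeness hypothesis (that the $\sup$ defining $g$ is attained) is exactly what lets ``$I_{j+1}\in\cI$ covers its own left endpoint'' translate into an inequality on $g$. Plugging in $j=k$ yields $\wg^{(k)}(0,x)\ge(1,a_1)=(1,x)$, so by Step 1 the $k$ greedy arcs from $x$ cover $S^1$.

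\medskip\noindent\textbf{Step 3: Extracting the algorithm.} Combining Steps 1 and 2: the predicate ``$\exists x$ such that the $j$ greedy arcs from $x$ cover $S^1$'' is monotone in $j$ (appending one more arc to a covering chain still covers), it equals ``$\exists x:\ \wg^{(j)}(0,x)\ge(1,x)$'', it holds at $j=k$ by the exchange lemma, and any $j$ with a valid witness yields a genuine cover of size $j$, hence $j\ge k$. Therefore $k$ is exactly the smallest $j$ for which the existential threshold test on $\wg^{(j)}$ succeeds. The algorithm is then immediate: using closure of $\cF$ under composition, iteratively form $\wg^{(1)}=\wg$, $\wg^{(2)}=\wg\circ\wg^{(1)}$, $\dots$, and after forming each $\wg^{(j)}$ run the existential threshold test on it, stopping at the first success; since that occurs at $j=k$, we have formed $\wg^{(1)},\dots,\wg^{(k)}$ using $k-1$ compositions and performed $k$ threshold tests, and we output $k$ together with the witness $x$ returned by the last test. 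I expect the main obstacle to be Step 2: setting up the lift and the cyclic ordering of the optimal arcs cleanly enough that the inductive inequality (and the special role of the wrap-around arc $I_k$) is rigorous, while making sure the proof genuinely uses the ``$\max$, not $\sup$'' completeness assumption rather than quietly assuming an evaluation oracle for $g$.
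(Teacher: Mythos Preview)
Your proof is correct and follows the same approach as the paper: iteratively form $\wg^{(1)},\wg^{(2)},\dots$ by composition, run the existential threshold test on each, and halt at the first success. The paper's own proof is far terser---it simply states this algorithm and asserts termination after $k$ steps---whereas your Steps~1 and~2 supply the correctness argument (that the threshold test on $\wg^{(j)}$ detects exactly whether a size-$j$ cover exists) which the paper relegates to the one-line remark that the \AAC definition ``coincides with solving \textsc{ArcCover}.''
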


\begin{proof}
Let $g^{(i)}$ denote the function $\wg$ composed with itself $i$ times, so $g^{(i)} = g\circ g^{(i-1)}$.
Thus we can compute $g^{(i)}$ for each $i$ via function compositions
For each $g^{(i)}$ we can compute an existential threshold test to determine if there is a solution of size $i$,
and halt if so.
This algorithm terminates after $k-1$ function compositions and $k$ existential threshold tests.
\end{proof}

Importantly, note that to use this algorithm, one only needs to provide a representation of $\wg$. In the next section, we will show that if $g$ is a piecewise linear rational function,
a representation of $\wg$ follows,
and the above algorithm runs in polynomial time.

\begin{remark}
It is possible to reduce the number of function compositions and existential threshold tests in \cref{thm:slow-alg} from $k$ down to $O(\log k)$ by using repeated doubling to binary search for $k$. We omit such details as we do not focus on exact runtimes in this paper.
\end{remark}

\subsection{Representations of the Unit Circle}

Since we are dealing with covering problems on the unit circle $S^1$, we need to have useful ways of representing points along it.
There is one obvious method:
Represent points by their angle.
Rather than using units of radians or degrees,
it suffices to simply represent each angle as an element of the range $[0,1)$.
We will use this in \cref{sec:art-gallery}.
We refer to this as the \defn{unit-interval representation}.

The other method we will use in \cref{sec:sep} is more involved.
We start by describing a slightly simpler method:
Represent each point in $S^1$ by the coordinates of its embedding in $\R^2$.
This is equivalent to using the two-dimensional real-valued unit vectors
as representatives.
Although we could use these for \cref{sec:sep},
some of the details would be more involved,
so it will make things simpler later to add a bit of extra complexity to the representation.
Specifically,
we represent each point $x$ in $S^1$
with \emph{any} of the points in $\R^2\setminus\{(0,0)\}$
along the ray from the origin through $x$ in the embedding of $S^1$ in $\R^2$.
This representation will greatly aid with the presentation of some computational purposes,
since it will allow us to bypass directly representing angles in \cref{sec:sep}.
It should be noted that some care will have to be taken to ensure the next-generator functions are well-defined in this form.
We refer to this as the \defn{ray representation},
although computationally we will store the rays simply as a point along the ray (since the origin of all rays are identical).
The point not necessarily of unit distance from the origin, but rather the rays induce an equivalence relation.

In both cases, the primary computational aspect we need is that any two points
along $S^1$ can be compared in their counter-clockwise order relative to a branch cut
(note that a branch cut itself is a point in $S^1$).
This is trivial in the unit-interval representation.
In the ray representation,
this can be done with a cross product.
Importantly, this means that obtaining a lifted next-generator
is as simple as performing such a comparison with some canonical branch cut.

\subsection{Piecewise Linear Rational Functions}

A \defn{$d$-dimensional linear rational function} is a function of the form:
\[ \overline{x}\mapsto \frac{A\overline{x}+\overline{b}}{C\overline{x}+\overline{d}} \qquad 
\text{ for $A,C\in\Z^{d\times d}$ and $\overline{b},\overline{d}\in\Z^d$ (division is pointwise).} \]
We will only consider $d\in \{1, 2\}$.
As discussed in the previous section, the $2$-dimensional representation specifically corresponds to
the ray representation,
in which case we require that $\overline{b}=\overline{d}=\overline{0}$.
The reason for this is simple: this is enough to guarantee that all points along a ray map to points along the same ``next'' ray.

We will consider \defn{piecewise linear rational functions} that are linear rational functions when the domain is partitioned into several intervals (i.e. arcs along the unit circle).
We call the end points of these intervals \defn{boundary points},
which themselves are just points in $S^1$ in our chosen representation.
Note that in the ray representation,
boundary points are entire rays
(represented by a point along the ray).
Since we have the ability to compare ordering relative to a branch cut, there is no fundamental difference
between the unit-interval and ray representations.

The main reason for the use of linear rational functions
is that they are closed under composition.
This is also true of piecewise linear rational functions.

In this paper, we will use piecewise linear rational functions with rational coefficients
of one or two variables in the aforementioned domains,
whose boundary points (unlike the coefficients) use values from the field
of \defn{first order radical numbers}.
That is, numbers of the form $a+\sqrt{c}$,
where $a,b,c$ are rationals.
Surprisingly, this will not cause any significant difficulties in our constructions, as we will not have higher order radicals.
Note that the sum of square roots is a major bottleneck in proving NP-completeness of geometric problems in the Turing machine model of computation \cite{o1981advanced,topp_p33}. In our case,
we will only deal with two square roots at a time (at most one on each side of a comparison), so we avoid this problem.

We defer the details of the following theorem to \cref{subsec:aac_deets}, where we prove that the algorithm given in 
\cref{thm:slow-alg} runs in polynomial time:

\begin{theorem}
\label{thm:composition_lrf}
If $g$ is a piecewise linear rational function
for either the unit-interval or ray representations of $S^1$, 
the \AAC problem can be solved in time polynomial in the size of the optimal solution $k$, 
the combined bit-complexity of the end points of the pieces and each linear rational function
(i.e. polynomial in $k$ and in the bit complexity of the representation of $g$).
\end{theorem}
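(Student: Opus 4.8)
The plan is to instantiate the generic algorithm of \cref{thm:slow-alg} with the function family $\cF$ consisting of all piecewise $d$-dimensional linear rational functions ($d\in\{1,2\}$) with rational coefficients whose boundary points are first order radical numbers, and to check that this $\cF$ is \emph{strongly computable} with only polynomial blow-up under iteration. Let $m$ be the number of pieces of $g$, and let $\beta$ bound the combined bit-complexity of the coefficients of the pieces of $g$ and of its boundary points. Three ingredients are needed: (a) closure under composition of a single linear rational function, with linear growth of bit-complexity; (b) an existential threshold test; (c) a polynomial bound on the number of pieces of the iterates of $\wg$, together with control of the number field generated by the boundary points.

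For (a), in the unit-interval case ($d=1$) a linear rational function $x\mapsto (ax+b)/(cx+d)$ is encoded by the integer matrix with rows $(a,b)$ and $(c,d)$, and composition is matrix multiplication; composing $i$ pieces therefore yields coefficients that are entries of a product of $i$ such matrices, of bit-complexity $O(i\beta)$. In the ray case ($d=2$), the requirement $\overline b=\overline d=\overline 0$ makes these maps act on the one-dimensional space of rays essentially as projective-linear maps, and I would verify by direct substitution that a composition is again of the required form with linearly-growing coefficient bit-complexity. For (b), given a piecewise linear rational $f$ that equals $f_j$ on a piece $[t_j,t_{j+1})$, deciding whether some $x$ satisfies $f(0,x)\ge(1,x)$ reduces, on each piece and after clearing denominators, to a quadratic inequality with rational coefficients; its solution set on that piece is a union of sub-arcs whose endpoints are roots of such quadratics, hence first order radical numbers, and a witness is read off directly. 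The only comparisons of points of $S^1$ that arise are comparisons of reals (unit-interval representation) or signs of cross products (ray representation); since every quantity involved has the form $a+\sqrt c$ with $a,c$ rational, each comparison involves at most one square root on each side and is settled by a bounded number of rational operations via repeated squaring --- this is exactly where it matters that no towers of nested radicals are ever built.

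The hard part will be ingredient (c). Here I would argue that it suffices to take as breakpoints of $g^{(i)}=g\circ g^{(i-1)}$ the breakpoints of $g^{(i-1)}$ together with the $\wg^{(i-1)}$-preimages of the breakpoints of $g$. Using that $\wg$ is monotone and proper ($\wg(z,x)\le(z+1,x)$, and $\wg(z,x)\ge(z,x)$), the image of one period under $\wg^{(i-1)}$ spans only $O(i)$ copies of the circle, so it contains at most $O(mi)$ lifted copies of the $m$ breakpoints of $g$; since $\wg^{(i-1)}$ is a monotone, injective linear rational function on each of its pieces, each such copy has a single preimage, contributing $O(mi)$ new breakpoints at step $i$ and hence $O(mk^2)$ pieces after $k$ steps --- polynomial. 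Moreover, each new breakpoint is the image of a boundary point $t_\ell=p_\ell+\sqrt{q_\ell}$ of $g$ under the inverse of a single linear rational piece of $g^{(i-1)}$; that inverse is again a linear rational function with rational coefficients, and applying it to $p_\ell+\sqrt{q_\ell}$ gives, after rationalizing the denominator, another number of the form (rational) $+$ (rational) $\cdot\sqrt{q_\ell}$ --- the radicand is unchanged. Hence every breakpoint ever produced lies in the fixed finite set of first order radical numbers whose radicands are among those of $g$, so all arithmetic stays polynomial.

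Combining these, each of the at most $k-1$ compositions and $k$ existential threshold tests in \cref{thm:slow-alg} runs in time polynomial in $k$ and $\beta$: a composition recomputes $O(mk^2)$ breakpoints (each an inverse linear rational map applied to a breakpoint of $g$) and $O(mk^2)$ per-piece coefficient matrices (each an $O(k)$-fold matrix product), and a threshold test performs $O(mk^2)$ per-piece quadratic solves with first-order-radical comparisons. I expect the only remaining delicate points to be technical bookkeeping: excluding or handling separately the poles and degenerate (constant) pieces of the linear rational functions, so that the monotonicity and injectivity underlying the piece-count bound are not violated, and tracking the $\Z$-coordinate wraps carefully when forming the lift of a composition from the lifts of its factors.
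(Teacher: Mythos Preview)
Your proposal is correct and follows essentially the same approach as the paper: instantiate \cref{thm:slow-alg}, verify closure under composition and the existential threshold test piece-by-piece via quadratics, and inductively bound the number and bit-complexity of the breakpoints of $\wg^{(i)}$. The only notable difference is bookkeeping: the paper writes $\wg^{(m+1)}=\wg^{(m)}\circ\wg$ with the \emph{single} $\wg$ as the inner function, so new breakpoints are $\wg^{-1}$-preimages of the (inductively $\le nm$) breakpoints of $\wg^{(m)}$, giving the slightly sharper bound $n(m{+}1)=O(nk)$ pieces rather than your $O(nk^2)$; your extra observation that the radicand of each breakpoint is preserved under a rational-coefficient linear rational map is a nice refinement that the paper leaves implicit in its bit-complexity lemma.
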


\subsection{Details for piecewise linear rational functions}
\label{subsec:aac_deets}
We care about the \defn{bit complexities} of the rational functions whose end points are radicals. 
We define bit complexity as follows.
\begin{definition}
The bit complexity of:
\begin{enumerate}
    \item an integer $x$ is the number of bits that it takes to store $x$, i.e. the bit complexity is $\lceil \log_2 x\rceil$.
    \item a linear rational function $f$ is the maximum bit complexity of its coefficients.
    \item a rational number is the maximum of the bit complexity of the denominator or numerator.
    \item a radical $a+b\sqrt{c}$ is the maximum bit complexity of $a$, $b$, and $c$. 
\end{enumerate}
\end{definition}
Note that these definitions ensure that the bit complexities of these objects are proportional to the number of bits needed to store them on a Turing machine. Furthermore, they have the following properties whose proofs are straightforward and thus omitted:
\begin{lemma}
\label{lemma:bit_complexity}
The following are facts about bit complexities.
\begin{enumerate}
    \item The bit complexity of $f\circ g$,
    the composition of two linear rational functions $f$ and $g$,
    is at most the sum of the bit complexities of $f$ and $g$.
    \item The bit complexities of $x\cdot y$,
    the product of two rational numbers $x$ and $y$,
    is at most the sum of the bit complexities of $x$ and $y$
    \item The bit complexities of $f(x)$,
    the evaluation of a linear rational function $f$ 
    on a first-order radical $x$,
    is at most the sum of complexity of $f$ and $x$.
\end{enumerate}
\end{lemma}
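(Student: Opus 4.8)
The plan is to verify the three items by short, direct symbolic computations, leaning on two structural observations. First, a one-dimensional linear rational function $x\mapsto \frac{ax+b}{cx+d}$ is precisely the projective action of the integer matrix with rows $(a,b)$ and $(c,d)$, so composing two such functions corresponds to multiplying their matrices; in the two-dimensional ray representation (where $\overline b=\overline d=\overline 0$) the map is, after clearing the common denominator, just a linear map $\overline x\mapsto A\overline x$, so again composition is matrix multiplication. Second, for a fixed rational $\gamma$ the set of first-order radicals $a'+b'\sqrt\gamma$ with $a',b'\in\mathbb Q$ is closed under linear rational maps with \emph{rational} coefficients; the only way a second, independent square root could enter is through the coefficients of the map itself, and those are rational in every setting we use. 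This is exactly the ``at most two square roots at a time'' point made earlier in this section. Throughout I will use the routine estimates that the bit complexity of a product of two integers (or rationals) is at most the sum of their bit complexities, while a sum of two of them costs at most one extra bit; since we only ever add $d\le 2$ terms, these extra bits amount to a harmless additive constant per operation, absorbed by the fact that bit complexity is defined only up to proportionality.

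For item (2) I would write $x=p/q$ and $y=r/s$ with $p,q,r,s\in\mathbb Z$, so $xy=(pr)/(qs)$; the bit complexity of $pr$ is at most that of $p$ plus that of $r$, hence at most the bit complexity of $x$ plus that of $y$, likewise for $qs$, and cancelling common factors only decreases bit complexity. For item (1), the coefficients of $f\circ g$ are the entries of the product matrix, each a sum of at most $d\le 2$ products of a coefficient of $f$ with a coefficient of $g$; each such product has bit complexity at most the sum of those of $f$ and $g$, and the lone addition costs one more bit. This is precisely what makes the $k$-fold composition $g^{(k)}$ have bit complexity growing only linearly in $k$ and in the bit complexity of $g$, which is what the deferred proof of \cref{thm:composition_lrf} will need.

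Item (3) is where I expect the only genuine care is needed. I would substitute $x=\alpha+\beta\sqrt\gamma$ (with $\alpha,\beta,\gamma\in\mathbb Q$) into $f(x)=\frac{ax+b}{cx+d}$ and note that $ax+b=(a\alpha+b)+(a\beta)\sqrt\gamma$ and $cx+d=(c\alpha+d)+(c\beta)\sqrt\gamma$ are first-order radicals with the \emph{same} radicand. If $cx+d=0$ then $f(x)$ is the point at infinity, which both representations store directly and with bounded bit complexity; otherwise, multiplying numerator and denominator by the conjugate $(c\alpha+d)-(c\beta)\sqrt\gamma$ turns the denominator into the rational $D=(c\alpha+d)^2-(c\beta)^2\gamma$ and the numerator into $\mu+\nu\sqrt\gamma$ with
\[ \mu=(a\alpha+b)(c\alpha+d)-(a\beta)(c\beta)\gamma,\qquad \nu=(a\beta)(c\alpha+d)-(a\alpha+b)(c\beta), \]
so $f(x)=\frac{\mu}{D}+\frac{\nu}{D}\sqrt\gamma$ is again a first-order radical, confirming closure. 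Each of $D,\mu,\nu$ is obtained from the rationals $a,b,c,d,\alpha,\beta,\gamma$ by a bounded number of rational additions and multiplications, each monomial a product of a bounded number of them, so by the same estimates its bit complexity is within a constant factor of the sum of the bit complexities of $f$ and $x$; hence so are those of $\mu/D$, $\nu/D$, and of $f(x)$.

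The main obstacle, then, is isolated to item (3): one must check that pushing a radical through a M\"obius map creates no higher-order radical -- which is exactly why rational coefficients are assumed -- and that rationalizing the denominator leaves the bit complexity linear rather than letting it blow up. Items (1) and (2), and the remaining bookkeeping in (3), are routine, which is presumably why the text states these facts without proof.
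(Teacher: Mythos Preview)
Your proposal is correct and is precisely the kind of direct symbolic verification the paper has in mind: the paper omits the proof entirely, stating only that these facts are ``straightforward and thus omitted.'' Your matrix-multiplication view of composition for item~(1), the obvious numerator/denominator bound for item~(2), and the conjugate-rationalization for item~(3) are exactly the expected arguments; your observation that the bounds hold only up to an additive $O(1)$ (from carries) or a fixed multiplicative constant (from the bounded-degree polynomial expressions in item~(3)) is accurate and is the only subtlety worth flagging, but it is harmless for every downstream use in the paper, which only needs polynomial growth.
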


\begin{lemma}
\label{lemma:lr-max-min}
Let $f_A$ and $f_B$ be two piecewise linear rational functions
in either the unit-interval or ray representation,
with domains $D_A,D_B\subset S^1$ (none of $D_A,D_B,S^1$ are necessarily equal).
Then, for any choice of branch cut, the following are also piecewise linear rational functions with domain $D_A\cup D_B$:
\begin{itemize}
    \item $f_{\max}(x) = \begin{cases}
        \max(f_A(x), f_B(x)) & x\in D_A\cap D_B\\
        f_A(x) & x\in D_A\setminus D_B\\
        f_B(x) & x\in D_B\setminus D_A
    \end{cases}$
    \item $f_{\min}(x) = \begin{cases}
        \min(f_A(x), f_B(x)) & x\in D_A\cap D_B\\
        f_A(x) & x\in D_A\setminus D_B\\
        f_B(x) & x\in D_B\setminus D_A
    \end{cases}$
\end{itemize}
Moreover, if $f_A$ and $f_B$ respectively have $n$ and $m$ pieces within their respective domains,
then $f_{\max}$ and $f_{\min}$ each have $O(n+m)$ boundary points,
and the description complexity of the resulting function is at most the sum of the description complexity of $f_A$ and $f_B$.
\end{lemma}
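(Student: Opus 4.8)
The plan is to reduce everything to the behaviour of a single pair of linear rational pieces together with one scalar comparison. First partition $D_A\cup D_B$ into the three regions named in the statement. On $D_A\setminus D_B$ the claimed function is literally $f_A$ restricted there, and on $D_B\setminus D_A$ it is $f_B$ restricted there, so these regions introduce no new linear rational functions at all; the only new boundary points are the boundary points of $D_A$ and of $D_B$ themselves, and since $D_A$ is a union of at most $n$ arcs and $D_B$ a union of at most $m$ arcs, the region $D_A\cap D_B$ is a union of $O(n+m)$ arcs and the split into the three regions costs $O(n+m)$ boundary points. So all the real content lies over $D_A\cap D_B$.

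Over $D_A\cap D_B$ I would take the common refinement of the partitions induced by $f_A$ and $f_B$, which still has $O(n+m)$ pieces, and additionally split at the chosen branch cut point (one extra split). Fix one resulting piece $J$, an arc not containing the branch cut, on which $f_A$ agrees with a single linear rational function $p$ and $f_B$ with a single linear rational function $q$. A linear rational function is monotone on any interval avoiding its pole, so (in the unit-interval representation, and likewise for the induced map on rays in the ray representation) both $p$ and $q$ map $J$ monotonically onto arcs of $S^1$. The truth value of ``$p(x)\ge q(x)$'' in the branch-cut order can therefore change only at: the unique point of $J$ where $p(x)$ equals the branch cut, the unique point where $q(x)$ equals the branch cut, and the points where $p(x)=q(x)$. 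This last set is cut out by a single polynomial equation of degree at most two — obtained in the unit-interval representation by clearing denominators in $\frac{ax+b}{cx+d}=\frac{a'x+b'}{c'x+d'}$, and in the ray representation by expanding the cross-product condition that $p(\overline{x})$ and $q(\overline{x})$ lie along the same ray — hence it has at most two solutions. Between consecutive such points the comparison has constant sign, so $f_{\max}$ equals whichever of $p,q$ is the larger there, a single linear rational function; the argument for $f_{\min}$ is identical.

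Assembling the pieces: each of the $O(n+m)$ refined pieces contributes $O(1)$ pieces to $f_{\max}$ (resp. $f_{\min}$), giving the claimed $O(n+m)$ bound on boundary points. For the description-complexity bound, the key observation is that every linear rational function appearing in $f_{\max}$ is a verbatim copy of a piece of $f_A$ or of $f_B$ — no composition or arithmetic is performed on the functions — so the function coefficients contribute nothing beyond the description complexities of $f_A$ and $f_B$. The only new objects are the boundary points arising from the degree-at-most-two equations above; their coefficients are built from the rational coefficients of $p$ and $q$ by a bounded number of additions and multiplications, so by \cref{lemma:bit_complexity} they are rationals whose bit complexity is controlled by the sum of those of $p$ and $q$, and the quadratic formula then expresses each such boundary point as a first-order radical $a+b\sqrt{c}$ of comparable bit complexity (crucially, no higher-order radical appears). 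Summing over pieces yields that the total description complexity of $f_{\max}$ and of $f_{\min}$ is at most the sum of the description complexities of $f_A$ and $f_B$.

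The step I expect to be the main obstacle is this final bookkeeping: one must verify that clearing denominators and applying the quadratic formula inflate bit complexity only by the additive amounts permitted by \cref{lemma:bit_complexity}, and that the branch-cut split and the ``same-ray'' condition in the ray representation are handled cleanly and without producing nested radicals. Everything else follows routinely once the ``common refinement plus $O(1)$ crossings per piece'' structure is set up.
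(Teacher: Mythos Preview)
Your proposal is correct and follows essentially the same route as the paper: take the common refinement of the two piecewise structures, observe there are $O(n+m)$ shared pieces, and within each shared piece bound the number of crossings of $f_A$ and $f_B$ by two via a quadratic equation whose roots are first-order radicals. You are in fact more careful than the paper in two respects---you explicitly account for the points where $p$ or $q$ crosses the branch cut (which the paper's proof elides), and you spell out the description-complexity argument (the paper's proof only treats the boundary-point count and the first-order-radical claim).
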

\begin{proof}
Consider $f_{\max}$ (the proof is identical for $f_{\min}$).
Split into maximal arcs of $S^1$
contained entirely within some piece of $f_A$ and some piece of $f_B$, or entirely out of the domain of one but not the other.
There are at most $O(n+m)$ such maximal arcs.
Within each arc entirely out of the domain of one,
there are no additional boundary points added.
To complete the proof, it suffices to show that within each shared maximal arc,
there are $O(1)$ points $x$ such that $f_A(x)=f_B(x)$, and all the additional boundary points are first order radical numbers,
since that would imply $O(n+m)$ total boundary points.

In the unit-interval representation,
the equation $f_A(x)=f_B(x)$
is asking for equality of general one-dimensional two rational functions,
which is a quadratic equation,
with at most two solution (unless the two are equal everywhere, which does not induce new boundary points).

In the ray representation,
the equation
$f_A(x)=f_B(x)$
is equivalent to requiring that
$\frac{A\overline{x}}{B\overline{x}}=\frac{C\overline{x}}{D\overline{x}}$,
for matrices $A,B,C,D\in\R^{2\times2}$,
or
$A\overline{x}\odot D\overline{x}=C\overline{x}\odot B\overline{x}$,
where $\odot$ denotes point-wise multiplication.
This is a system of (two) quadratic equations in two variables with pure quadratic terms
(all terms are of the form $ax_i^2$ for one of the variables $x_i$),
which can be reduced to a system of general quadratic equations in one variable,
and hence has at most $2$ solutions,
both of which are first-order radical numbers.
\end{proof}

\begin{lemma}
\label{lemma:comp-lifted}
Let $\wg$ be a lifted next-generator function
such that $g$ is a piecewise linear rational function (in either the unit-interval or ray representations)
with at most $n$ boundary points,
each of bit complexity $\leq b$.
Then $\wg^{(m)}$ has at most $\text{poly}(n,m)$ boundary points within one period,
i.e. within $\left(S^1\times\{0\}\right)\subset\left(S^1\times\Z\right)$.
Moreover, each of them is of bit complexity also at most $\text{poly}(b,m)$.
\end{lemma}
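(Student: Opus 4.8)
The plan is to prove both bounds simultaneously by induction on $m$, carrying three invariants for $\wg^{(m)}$: restricted to a single period $S^1\times\{0\}$ it is a piecewise linear rational function with $O(nm^2)$ pieces; each piece is a linear rational function of bit complexity $O(mb)$; and every boundary point has bit complexity $O(mb)$ (and, in the unit-interval representation, is a first order radical). Since $\wg$ commutes with the period shift $(z,x)\mapsto(z+1,x)$, so does every $\wg^{(m)}$, and it suffices to analyze period $0$. The base case $m=1$ is immediate: the lift does not change the linear rational pieces of $g$, so $\wg$ has $n$ pieces, each of complexity at most $b$, with boundary points of complexity at most $b$.

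For the inductive step write $\wg^{(m)}=\wg\circ\wg^{(m-1)}$. The essential structural input is that $\wg$ is monotone and proper. Monotone maps are closed under composition, so $\wg^{(m-1)}$ is monotone; and iterating the properness inequality $\wg(z,x)\le(z+1,x)$ together with monotonicity (and $\wg(z,x)\ge(z,x)$, which holds since $\wg$ is a next-generator) gives $(z,x)\le\wg^{(m-1)}(z,x)\le(z+m-1,x)$. Hence $\wg^{(m-1)}$ carries $S^1\times\{0\}$ into a region meeting only the $m$ periods $0,\dots,m-1$, which together contain at most $nm$ of the boundary points of $\wg$ (the $n$ boundary points of $g$, repeated once per period). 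A point is a boundary point of $\wg^{(m)}$ only if it is a boundary point of $\wg^{(m-1)}$ or $\wg^{(m-1)}$ maps it to a boundary point of $\wg$; by monotonicity of $\wg^{(m-1)}$ the preimage of each boundary point of $\wg$ is a single (possibly empty or degenerate) arc, so at most $O(nm)$ new boundary points arise. Writing $N_m$ for the number of pieces within one period, this gives $N_m\le N_{m-1}+O(nm)$, hence $N_m=O(nm^2)$. On each of the resulting pieces, $\wg^{(m)}$ equals $h\circ\ell$ for a piece $h$ of $\wg$ (complexity $\le b$) and a piece $\ell$ of $\wg^{(m-1)}$ (complexity $\le(m-1)b$ by induction), so by \cref{lemma:bit_complexity}(1) it is a linear rational function of bit complexity at most $mb$.

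It remains to bound the complexity of the new boundary points. Each is a solution $x$ of $\ell(x)=p$, where $\ell$ is a single piece of $\wg^{(m-1)}$ (bit complexity $\le(m-1)b$) and $p$ is a boundary point of $\wg$ (bit complexity $\le b$). In the unit-interval representation this equation is linear in $x$, so $x$ is the value at $p$ of the linear rational function $\ell^{-1}$; by \cref{lemma:bit_complexity}(3) this is a first order radical (with the same radicand as $p$) of bit complexity at most $(m-1)b+b=O(mb)$. In the ray representation the same equation reduces, via exactly the computation in the proof of \cref{lemma:lr-max-min}, to a quadratic in one variable whose coefficients have bit complexity $O(mb)$, so its two solutions again have bit complexity $O(mb)$ (and, since the right-hand sides $p$ are boundary points of the original $\wg$, these are the equations already analyzed there). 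The old boundary points keep their complexity $O((m-1)b)=O(mb)$, which closes the induction and yields $\mathrm{poly}(n,m)$ boundary points, each of bit complexity $\mathrm{poly}(b,m)$.

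The step I expect to be the real obstacle is the piece-count bound, not the arithmetic bookkeeping: a priori the composition of $m$ piecewise linear rational functions with $n$ pieces each could have $n^{m}$ pieces, so an exponential blow-up is the default behaviour. The two properties that rescue us are the monotonicity of $\wg$ — which forces the preimage of each piece of $\wg$ under $\wg^{(m-1)}$ to be connected, so the refinement at step $m$ adds only $O(nm)$ new pieces rather than $O(n N_{m-1})$ — and the ``advance by at most one period per application'' property, which confines the image of one period to $O(m)$ periods and hence to $O(nm)$ pieces of $\wg$. Getting the interaction of these two facts right, while checking that discontinuities of $\wg^{(m-1)}$ cannot silently multiply the piece count, is where the care is needed.
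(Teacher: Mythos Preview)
Your proof is correct, but the paper takes a slightly different and tighter decomposition. You analyze $\wg^{(m)}=\wg\circ\wg^{(m-1)}$, so the new boundary points are preimages under $\wg^{(m-1)}$ of the boundary points of $\wg$; this forces you to invoke properness to bound the image of one period to $m$ periods, and it yields the recurrence $N_m\le N_{m-1}+O(nm)$, hence $N_m=O(nm^2)$. The paper instead uses the commuting decomposition $\wg^{(m+1)}=\wg^{(m)}\circ\wg$, so the new boundary points are $\wg^{-1}$ applied to the (inductively $\le nm$) boundary points of $\wg^{(m)}$; since $\wg^{-1}$ is itself monotone and periodic, this immediately gives $N_{m+1}\le n+nm=n(m+1)$, a linear rather than quadratic count, and without needing the image-bounding argument. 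The same swap simplifies the bit-complexity step: the paper only ever inverts the fixed function $\wg$ of complexity $b$, whereas you invert a piece of $\wg^{(m-1)}$ of complexity $(m-1)b$; both give $O(mb)$, but the paper's route avoids having to track the piece complexities of the iterates as a separate invariant. Your extra care about monotonicity controlling preimage connectedness and about the image staying within $m$ periods is exactly what is needed to make your direction of the composition work, and it does.
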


\begin{proof}
First, note that each $\wg^{(m)}$ is monotonic and periodic with period one as well.
We will now prove the first fact inductively.
For the base case, $\wg^{(1)}$ has exactly $n$ boundary points in $\left(S^1\times\{0\}\right)$.
We claim that $\wg^{(m)}$ has at most $nm$ boundary points in $\left(S^1\times\{0\}\right)$.
Assume this is true for $m\geq1$.
The boundary points of $\wg^{(m+1)}=\wg\circ\wg^{(m)}$
are the boundary points of $\wg$ in addition to the boundary points of the form $\wg^{(-1)}$(x),
where $x$ is a boundary point of $\wg^{(m)}$.
Note that $\wg^{(-1)}$
is also a periodic and monotonic piecewise linear rational function (with the same period),
so there are $\leq nm$ points of this form within each period.
Thus $\wg^{(m+1)}$ has at most $(m+1)n$ boundary points within each period.

The bit complexity of these boundary points follows a similar argument.
The base case is obvious.
Assume for induction that the boundary points of $\wg^{(m)}$ have bit complexity at most $mb$.
The bit complexity of $\wg^{(-1)}$ is at most $b$
in both representations.
Hence, the bit complexity of each additional boundary point added
is at most $(m+1)b$ by \cref{lemma:bit_complexity}
\end{proof}

Observe that the above lemma directly implies \cref{thm:composition_lrf}.
In particular, an existential threshold test can be performed
for a lifted piecewise linear rational function $\wg$ in polynomial time,
by evaluating $\wg$ at each boundary point (within one period),
and at all the zeroes (within one period) of the derivatives of the pieces of $x\mapsto\wg(x)-x$
(by Fermat's stationary point theorem),
so the result follows from 
\cref{thm:slow-alg}.

\section{The Contiguous Art Gallery Problem}
\label{sec:art-gallery}
As a reminder, we can formalize \CAG as follows:
\begin{definition}
Let $P$ be a simple polygon.
The \CAG problem
asks to find the minimum set
of contiguous paths covering the boundary,
such that each path is visible in its entirety by some point inside the polygon.
\end{definition}

The intuition for the reduction to \AAC is quite natural:
The circle is topologically equivalent to the polygon,
and so the arcs $\mathcal I$ will represent (maximal) contiguous paths along the polygon
that have a non-empty intersected visibility region.
Defining the next-generator function $g$ is similarly straightforward.
The main focus of this section is to obtain a piecewise linear rational representation of $g$.

We present the proof as a series of lemmas.
The sequence of lemmas can be viewed a series of progressively more general carefully chosen toy examples,
each building on the previous by making use of \cref{lemma:lr-max-min}.
The first few lemmas will be mostly standalone:

\begin{figure}[ht]
\centering
\includegraphics[scale=0.6,page=1]{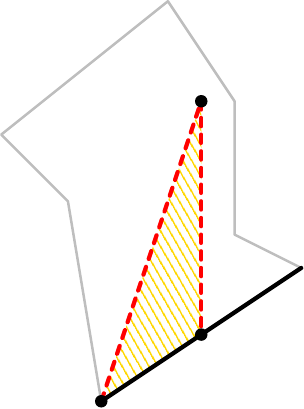}
\caption{Two two points along a shared edge both visible from a shared point, and the corresponding (empty) triangular region.}
\label{fig:suffix-prefix-unnecessary}
\end{figure}

\begin{lemma}
\label{lemma:suffix-prefix-unnecessary}
Let $p,p'$ be two points along the same edge $e$ in a simple polygon $P$.
Then if $y$ is a point in $P$ from which both $p$ and $p'$ are visible,
all points along the edge $e$ in between $p$ and $p'$ are also visible from $y$.
\end{lemma}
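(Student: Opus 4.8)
The plan is to argue purely from convexity of visibility along a segment. Let $e$ be the edge containing $p$ and $p'$, and let $q$ be any point strictly between $p$ and $p'$ on $e$. We want to show the segment $yq$ lies entirely inside $P$. First I would consider the (possibly degenerate) closed triangle $T$ with vertices $y$, $p$, and $p'$; since $p$ and $p'$ are both visible from $y$, the segments $yp$ and $yp'$ are contained in $P$, and the segment $pp'$ is contained in $e \subseteq \partial P \subseteq P$. So the entire boundary of $T$ lies in $P$.

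The key step is then to show that the interior of $T$ also lies in $P$. Here I would use the fact that $P$ is a simple polygon, so $P$ is the union of its boundary and its (open) interior, and $\partial P$ is a simple closed curve. If some point of $\Int{T}$ were outside $P$, then because $\partial T \subseteq P$, the boundary curve $\partial P$ would have to enter the open region $\Int{T}$ and exit it, crossing $\partial T$. But $\partial T$ is made of three segments: two of them, $yp$ and $yp'$, lie in $P$, and the third, $pp'$, lies along the single edge $e$ of $P$. A crossing of $\partial P$ with the relative interior of $yp$ or $yp'$ would place a boundary point of $P$ strictly inside a segment known to be contained in $P$ — fine on its own — but combined with the Jordan-curve structure it forces $\partial P$ to separate $T$, contradicting that $yq$ for the chosen interior point must then be blocked. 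The cleanest route: since $q \in e$ and $e$ is an edge, a neighborhood of $q$ in $T$ on the side of $e$ facing $y$ is locally inside $P$; then walk along $yq$ from $q$ toward $y$ and observe the first point where it could leave $P$ would be a boundary point of $P$ lying in $\Int{T}$, and connectedness of $P \cap \Int{T}$ (being a convex region minus nothing, by the same Jordan argument) rules this out.

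Concretely, I would phrase it as: the open triangle $\Int{T}$ is connected and disjoint from $\partial P$ (because $\partial P$ is a simple closed curve whose only possible intersections with $T$ are along $e$, i.e. along the segment $pp'\subseteq \partial T$, since $yp,yp'$ being in $P$ and the local picture at $q$ force any boundary arc inside $T$ to be absent), so $\Int{T}$ lies entirely in one of the two complementary regions of $\partial P$; since points of $\Int T$ arbitrarily close to the relative interior of $yp$ are in $\Int P$, we get $\Int{T}\subseteq \Int{P}$. Therefore $T \subseteq P$, and in particular $yq \subseteq T \subseteq P$, so $q$ is visible from $y$. Since $q$ was an arbitrary point between $p$ and $p'$, this proves the lemma; see \Cref{fig:suffix-prefix-unnecessary} for the relevant triangle.

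The main obstacle I anticipate is making the topological step fully rigorous without hand-waving: precisely, showing that $\partial P$ cannot pass through the open triangle $\Int{T}$. The subtlety is that $e$ is only one edge, so $\partial P$ meets the segment $pp'$ exactly in $pp'$ itself and nowhere else locally, and one must rule out $\partial P$ re-entering $\Int T$ through $yp$ or $yp'$ — this uses that those segments are in $P$ so any boundary point on them is a boundary point of $P$ that is not locally separating in the direction into $T$, which again follows from the simple-polygon structure. I would be careful to invoke the Jordan curve theorem (or the fact that a simple polygon is homeomorphic to a disk) explicitly rather than relying on intuition, since this is exactly the kind of "obvious" planar topology claim that needs a clean citation or a short self-contained argument.
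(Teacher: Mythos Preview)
Your approach is essentially the same as the paper's: both argue that the closed triangle with vertices $y,p,p'$ lies entirely in $P$, whence every segment $yq$ with $q$ between $p$ and $p'$ is contained in $P$. The paper's proof is a one-liner that simply asserts this containment (appealing to simplicity of $P$ and the accompanying figure), whereas you attempt to unpack the topological justification via the Jordan curve theorem; the extra care you take is not wrong, but it is more than the paper provides.
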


\begin{proof}
The $P$ is a simple polygon, the triangle with vertices $p,p',y$ must be contained entirely inside the polygon,
including each of its edges.
See \cref{fig:suffix-prefix-unnecessary}.
\end{proof}

\begin{lemma}
\label{lemma:greedy-choice}
Let $P$ be a polygon and $x$ be a point along the boundary.
Then, it is possible to compute, in polynomial time,
the furthest edge around the polygon counter-clockwise
so that some guard point inside the polygon
has a visibility region
that includes the entire boundary
counter-clockwise from $x$ to the edge,
as well as some part of the edge beyond the endpoint.
\end{lemma}
Note that this is slightly different from computing the furthest point counter-clockwise,
although it does narrow down which edge that point is on.

\begin{proof}
Starting at $x$,
traverse the polygon in counter-clockwise order.
For each edge encountered,
start by computing the visibility region of the shared vertex in the path so far.
Intersect this with the visibility region computed so far.
By \cref{lemma:suffix-prefix-unnecessary}, this is no different than intersecting the visibility polygons~\cite{ga-lacvpp-81}
over the continually changing point of the boundary for each edge,
which can be done in polynomial time~\cite{wa-hsrupas-77}.
Then,
if the visibility region becomes empty, halt,
since only a partial edge is visible.
Before the next iteration,
check if any part of the visibility region is contained in the half-plane
aligned with the edge,
directed inside the polygon.
If not, halt,
since no part of the next edge will be visible.
\end{proof}

The remaining lemmas build on each other.
Each deals with a pair of line segments $e$ and $f$,
and we will use some common notation for brevity.
In particular, each lemma will map points on $e$ to a particular point on $f$,
each in parametrized form.
These represent edges in the general problem, although we need not consider the general problem until the main proof.
Specifically, assume
$e=(p_e,p_e')$ and $f=(p_f,p_f')$.
Then a point along each can be defined by parameters $t$ and $r$,
as $\left(p_e+t(p_e'-p_e)\right)$ and $\left(p_f+r(p_f'-p_f)\right)$.
For concise statements, we let $e_t$ and $f_r$ denote each of these respective parametrized points.

\begin{definition}
    For points $x,y,z$, $\cone{x,y,z}$ is the counter-clockwise filled region bounded by the two rays from $y$ through $x$ and $z$.
\end{definition}

\begin{lemma}
\label{lemma:restricted-toy}
Let $e,f$ and $l$ be non-crossing line segments in $\R^2$,
with $e,f$ as before.
Assume that neither $e$ nor $f$ is crossed by the extension of $l$ to a line.
Let $x_1,x_2$ be points in $\R^2$.
Assume $l$, $x_1$, and $x_2$
all lie within the intersection of two half-planes
whose boundaries are the line extensions of $e$ and $f$.
Then we can compute a
piecewise linear rational function $R$
which
maps
a value $t\in[0,1]$
to
the maximum value $r\in[0,1]$,
such that there exists a point $y$ along $l$
for which $\cone{e_t,y,f_r}$
does not contain $x_1$ or $x_2$.
Some values in $[0,1]$
may not have any such valid values,
and we can also compute the ranges of values which are valid.
\end{lemma}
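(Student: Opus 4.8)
The plan is to reduce the problem to a family of one-dimensional optimization problems parametrized by the choice of point $y$ along $l$, and then take a maximum (or supremum) over $y$. First I would fix $t$ and a point $y = y_s$ along $l$ (parametrized by $s \in [0,1]$), and analyze the cone $\cone{e_t, y_s, f_r}$ as $r$ varies. Because $l, x_1, x_2$ all lie in the intersection of the two half-planes bounded by the lines through $e$ and $f$, the cone with apex $y_s$ opening towards $e_t$ and $f_r$ is a genuine (convex, less than $\pi$) angular wedge, and the constraint ``$x_1 \notin \cone{e_t, y_s, f_r}$'' becomes a statement that $x_1$ lies on a prescribed side of the ray from $y_s$ through $f_r$ (the ray from $y_s$ through $e_t$ being fixed once $t, s$ are fixed). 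This is a linear inequality in the coordinates of $f_r$, hence a linear (in fact affine) inequality in $r$; similarly for $x_2$. So for fixed $t$ and $s$, the set of valid $r$ is an interval (possibly empty), and its upper endpoint $r = R_s(t)$ is obtained by solving a linear equation — the coordinates of the intersection of the line through $e_t$ and $f$ (no: of the ray from $y_s$ grazing $x_i$) — whose solution is a linear rational function of $t$ (and of $s$).

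Next I would take the maximum over $s \in [0,1]$. The key structural observation is that as $y_s$ sweeps along the segment $l$, the binding constraint (whether $x_1$ or $x_2$ is the obstruction, and from which side) changes only $O(1)$ times, and within each such regime $R_s(t)$ is a linear rational function of $(s,t)$. Maximizing a linear rational function of $s$ over $s \in [0,1]$ with $O(1)$ breakpoints is attained either at an endpoint $s \in \{0,1\}$ or at a breakpoint, and in all cases the optimal $s^*$ is itself a linear rational function of $t$ (the breakpoints being where the graze-ray through $x_1$ and the graze-ray through $x_2$ coincide, or where feasibility begins/ends). Substituting $s^*(t)$ back into $R_s(t)$ and invoking the fact that a composition of linear rational functions is linear rational (as stressed in the introduction), we get that $R(t) = \max_s R_s(t)$ is a piecewise linear rational function of $t$, with $O(1)$ pieces; the feasible set of $t$-values is the union of $O(1)$ intervals, delimited by where the feasible $r$-interval becomes empty (again linear rational equations in $t$). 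I would use \cref{lemma:lr-max-min} to assemble the finitely many candidate functions (one per regime of the binding constraint) into a single piecewise linear rational function, which also bounds the bit complexity.

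The main obstacle I anticipate is the casework in identifying the binding constraint and its sign: one must carefully argue that the cone $\cone{e_t, y_s, f_r}$ is always a well-defined wedge of angle $< \pi$ (this is exactly what the two-half-plane hypothesis buys us, and why the extension of $l$ is assumed not to cross $e$ or $f$), and that ``containment of $x_i$'' genuinely translates to a single affine inequality in $r$ rather than an intersection of two — i.e., that we are on the correct side of each of the two rays bounding the wedge, with only one of them moving. A secondary subtlety is handling the boundary values $r \in \{0,1\}$ and the clamping of $R(t)$ to $[0,1]$ (if the unconstrained maximizer exceeds $1$, the answer is $1$, provided $r=1$ is still feasible), which introduces a few extra breakpoints but no new function types. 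Once the wedge geometry is pinned down, everything else is bookkeeping over $O(1)$ linear-rational pieces, which \cref{lemma:lr-max-min} is designed to absorb.
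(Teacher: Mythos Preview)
Your approach is essentially the same as the paper's: both reduce to a finite list of candidate guard locations along $l$---the two endpoints of $l$, the intersection of $l$ with the line $x_1x_2$, and the intersections of $l$ with the lines through $e_t$ and each $x_i$---each of which is a linear rational function of $t$; for each candidate one computes the induced $r$ as a minimum of linear rational values (the rays from the guard through $x_1,x_2$ meeting the line of $f$), and then takes the outer maximum via \cref{lemma:lr-max-min}. The only difference is in how the candidate list is justified: the paper uses a direct geometric perturbation argument (slide $y$ along $l$ until a bounding ray of the cone grazes one of $x_1,x_2$, breaking into the cases of \cref{fig:simple-art-cases}), whereas you argue that a piecewise M\"obius function of $s$ attains its maximum at a breakpoint or endpoint. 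These are two phrasings of the same local-extremum computation, and they produce the identical candidate set. One small imprecision in your write-up: within a fixed regime the bound $R_s$ depends only on $s$ (the graze-ray from $y_s$ through $x_i$ does not involve $e_t$), not jointly on $(s,t)$; the $t$-dependence enters only through the regime boundaries, which---as you note---are themselves linear rational in $t$.
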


\begin{proof}
First, note that it suffices to find the piecewise linear rational function with respect to only
the \emph{line} passing through $f$, rather than $f$ itself.
Such a function can be combined with two constants for the endpoints via a min and max,
using \cref{lemma:lr-max-min}.

Given the guard point as a parameter along $l$, we can compute the final
value along the line passing through $f$
as the minimum (parametrized) point intersected by the rays from the guard points through $x_1$ and $x_2$ (or $\infty$ if neither intersects),
and this minimum can be taken as a piecewise linear rational function due to \cref{lemma:lr-max-min}.
Since linear rational functions can be composed,
it suffices to find the guard point, or to compute a small number of potential guard points, and to take the maximum over those.

Consider the lines passing through the pairs of points $(e_t,x_1)$, $(e_t,x_2)$, $(x_1,x_2)$,
as well as $e_t$ with each endpoint of $l$.
Consider all the intersections with each of these lines with the line segment $f$ (at least two of which are just the endpoints of $l$ themselves).
We claim that one of these intersection points forms an optimal guard point
(a guard maximizing $r$ for a fixed $t$).
Note that the optimal guard point may not be unique.
This would be enough to complete the proof.
Suppose
for a pair of points $e_t$ and $f_r$
that there is an optimal guard point $y$ along $l$.
Then $\cone{e_t,y,f_r}$ does not contain either of $x_1$ and $x_2$.

We will break into a series of cases.
We have a subfigure for each case in \cref{fig:simple-art-cases}, with labels in parentheses.
If $y$ is an endpoint of $l$ (case a), we are done.
Consider moving $y$ along each direction of $l$ to obtain
maximal-distance points $y'$ and $y''$ along $l$ from $y$
such that
$\cone{e_t,y',f_r}$
and
$\cone{e_t,y'',f_r}$
do not contain either $x_1$ or $x_2$.
If either $y'$ or $y''$ is equal to an endpoint of $l$ (case b),
we are done.
If either pair of points $(e_t,y')$ or $(e_t,y'')$ induce lines containing
$x_1$ or $x_2$
on their boundaries (case c), we are also done.
If the lines induced by either of $(y',f_r)$ or $(y'',f_r)$
contain both $x_1$ or $x_2$ on their boundaries (case d),
we are done.
The only remaining case is that one of $x_1$ or $x_2$
lies on the line induced by $(y',f_r)$ and the other
lies on the line induced by $(y'',f_r)$.
If this happens,
then either each of $x_1$ and $x_2$
lie on opposite sides
of the line through $(f_r,y)$
(case e, a contradiction since then one would be inside a cone for some $y'''$ in between $y'$ and $y''$)
or each of $x_1$ and $x_2$ lie on opposite sides
of $f$
(case f, a contradiction of the half-plane assumption).
\end{proof}

\begin{figure}[ht]
\centering
\begin{minipage}[b]{0.245\textwidth}
    \centering
    \includegraphics[scale=0.40,page=2]{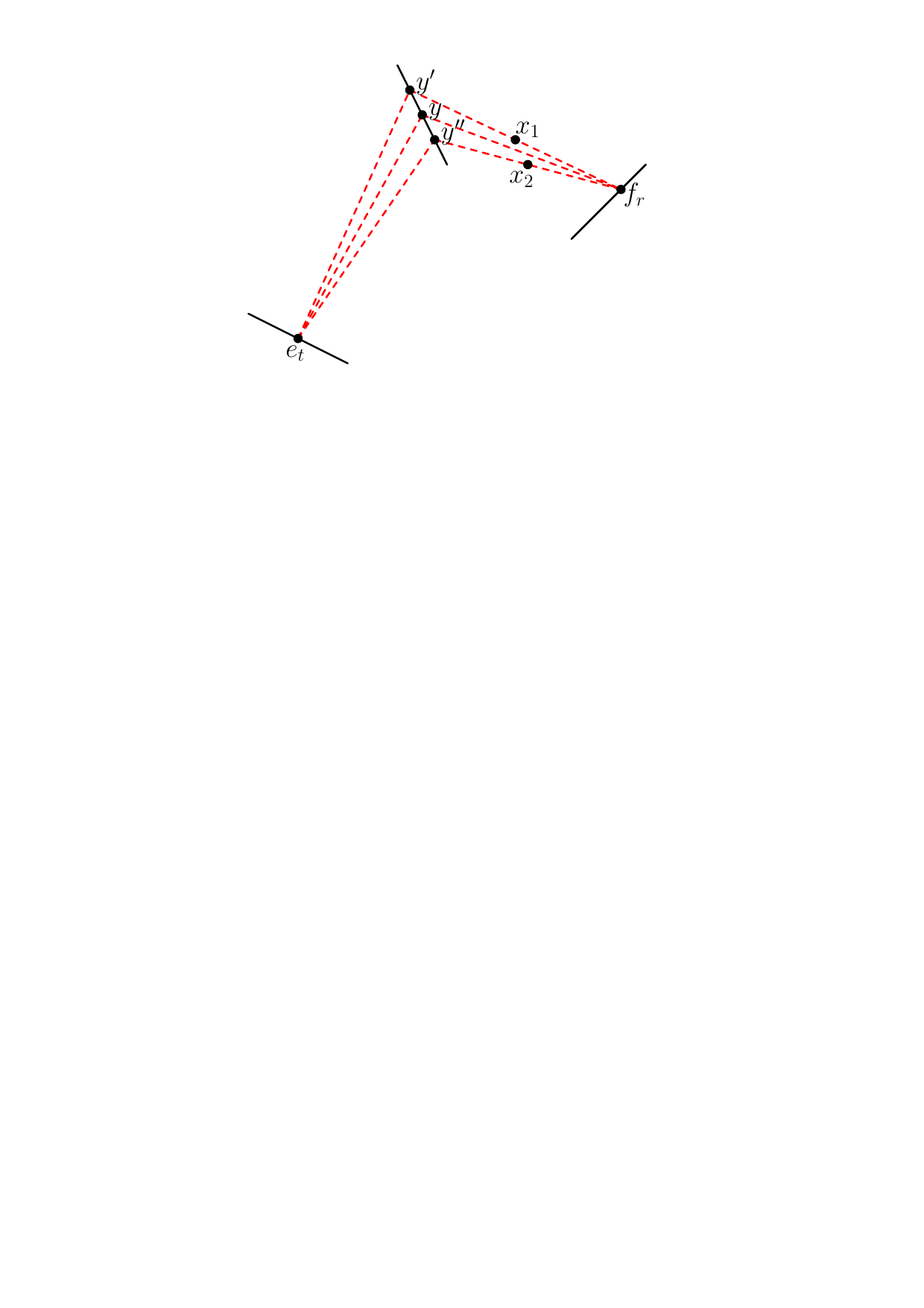}
    \captionsetup{justification=centering}
    \subcaption{}
    \label{fig:case-a}
\end{minipage}
\begin{minipage}[b]{0.245\textwidth}
    \centering
    \includegraphics[scale=0.40,page=4]{simple-art-cases}
    \captionsetup{justification=centering}
    \subcaption{}
    \label{fig:case-b}
\end{minipage}
\begin{minipage}[b]{0.245\textwidth}
    \centering
    \includegraphics[scale=0.40,page=5]{simple-art-cases}
    \captionsetup{justification=centering}
    \subcaption{}
    \label{fig:case-c}
\end{minipage}%
\begin{minipage}[b]{0.245\textwidth}
    \centering
    \includegraphics[scale=0.40,page=6]{simple-art-cases}
    \captionsetup{justification=centering}
    \subcaption{}
    \label{fig:case-d}
\end{minipage}

\vspace{0.5em}
\begin{minipage}[b]{0.6\textwidth}
    \centering
    \includegraphics[scale=0.40,page=7]{simple-art-cases}
    \includegraphics[scale=0.40,page=8]{simple-art-cases}
    \captionsetup{justification=centering}
    \subcaption{ --- two examples}
    \label{fig:case-e1}
\end{minipage}
\begin{minipage}[b]{0.26\textwidth}
    \centering
    \includegraphics[scale=0.40,page=9]{simple-art-cases}
    \captionsetup{justification=centering}
    \subcaption{}
    \label{fig:case-f}
\end{minipage}
\caption{Visualizations of the cases for \cref{lemma:restricted-toy}.}
\label{fig:simple-art-cases}
\end{figure}

\begin{lemma}
\label{lemma:restricted-toy-2}
Let $e,f$ and $l$ be non-crossing line segments in $\R^2$,
with $e,f$ as before.
Let $X$ be a finite set of points.
Assume $l$ and all points $x \in X$
lie within the intersection of two half-planes
whose boundaries are the line extensions of $e$ and $f$.
Then we can compute
a piecewise linear rational function $R$
which maps
a value $t\in[0,1]$ to
the maximum value $r\in[0,1]$,
such that there exists some point $y$ along $l$
for which $\cone{e_t,y,f_r}$
does not contain any $x\in X$.
Some values in $[0,1]$
may not have any such valid values,
and we can also compute the ranges of values which are valid.
\end{lemma}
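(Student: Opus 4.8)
The plan is to run the argument of \cref{lemma:restricted-toy} with the pair $\{x_1,x_2\}$ replaced throughout by the set $X$, enlarging the finite menu of candidate guard points accordingly. As there, I would first reduce to the line through $f$ — the two endpoints of $f$ are handled by a $\min$/$\max$ with the constants $0$ and $1$ via \cref{lemma:lr-max-min} — and observe that once a guard point $y$ along $l$ is fixed, the largest feasible $r$ is obtained by intersecting, for each $x\in X$, the ray from $y$ through $x$ with the line through $f$, and taking the smallest such parameter, capped to $[0,1]$. Each of these is a linear rational function of the position of $y$ along $l$ (the intersection of a line through a moving point with a fixed line is a perspectivity, hence linear rational), with a single internal breakpoint where $y$ crosses the line through $e_t$ and $x$, past which $x$ lies outside every cone $\cone{e_t,y,f_r}$ with $r\in[0,1]$ and so imposes no constraint. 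Writing $\rho_t(y)$ for the resulting value, $\rho_t$ is a piecewise linear rational function of $y$ with $O(|X|)$ pieces by iterated \cref{lemma:lr-max-min}, and $R(t)=\max_{y\in l}\rho_t(y)$.

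Next I would pin down a small explicit candidate set of guard points at which this maximum is always attained: the two endpoints of $l$; the points $l\cap\mathrm{line}(e_t,x)$ for $x\in X$; and the points $l\cap\mathrm{line}(x,x')$ for $x,x'\in X$ — in total $O(|X|^2)$ points, each a linear rational function of $t$ with an easily computed range of $t$ for which the intersection actually lies on $l$. (For $|X|=2$ this recovers exactly the five lines used in \cref{lemma:restricted-toy}.) The key claim is that every breakpoint of $\rho_t(\cdot)$ is of one of these forms: a breakpoint internal to one $x$'s contribution sits at $l\cap\mathrm{line}(e_t,x)$, while a breakpoint where the minimizing $x$ switches to some $x'$ forces the two rays from $y$ to meet the line through $f$ at a common point, hence $y,x,x'$ collinear, i.e. $y\in l\cap\mathrm{line}(x,x')$; the cap at $r\in\{0,1\}$ only introduces constant stretches. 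Since $\rho_t$ is linear rational, hence monotone, on each piece between consecutive breakpoints, its maximum over the segment $l$ occurs at a breakpoint or at an endpoint of $l$. This replaces the six-case analysis of \cref{lemma:restricted-toy} with a single monotonicity argument; the half-plane hypotheses on $e$ and $f$ play the same role as there, guaranteeing each $x$ contributes a genuine one-sided bound on $r$ and that the ``$x$ and $x'$ on opposite sides of $f$'' configuration is impossible.

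Finally I would assemble $R$. For each candidate $y_c$, the map $t\mapsto y_c(t)$ is linear rational, so composing it into $\rho_{(\cdot)}(\cdot)$ — using \cref{lemma:bit_complexity} for the composition and \cref{lemma:lr-max-min} for the $\min$ over $x\in X$ — yields a piecewise linear rational function $t\mapsto\rho_t(y_c(t))$ with $\mathrm{poly}(|X|)$ pieces; then $R=\max_c\rho_{(\cdot)}(y_c(\cdot))$ over the $O(|X|^2)$ candidates, again piecewise linear rational by \cref{lemma:lr-max-min}. The set of $t$ admitting a valid $r$ is the union of the candidates' domains, intersected with those $t$ for which some cone is feasible at all; this is the ``range of values which are valid'' demanded by the statement.

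I expect the main obstacle to be the correctness claim of the second step: identifying the exact piecewise structure of $\rho_t(y)$ and checking that its only breakpoints are the $O(|X|^2)$ listed points — in particular handling the side-dependence of each $x$'s constraint, the degenerate cases where a ray from $y$ through some $x$ is parallel to $f$, and the same boundary conventions used in \cref{lemma:restricted-toy} — so that the clean monotonicity argument is valid. The ``valid range'' bookkeeping (values of $t$ with no feasible $y$, and candidate intersections that fall off the end of $l$) is routine but must be threaded through, just as in \cref{lemma:restricted-toy}.
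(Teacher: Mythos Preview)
Your proposal is correct, but it takes a genuinely different route from the paper. The paper's proof is a one-line black-box reduction: apply \cref{lemma:restricted-toy} to every ordered pair $(x_1,x_2)\in X\times X$, obtaining functions $R_{x_1,x_2}$, and set $R=\min_{(x_1,x_2)}R_{x_1,x_2}$ via \cref{lemma:lr-max-min}. You instead open the box and re-run the argument of \cref{lemma:restricted-toy} with $X$ in place of $\{x_1,x_2\}$, enlarging the candidate guard set to the $O(|X|^2)$ points you list and then taking a $\max$ over candidates of a $\min$ over all of $X$. Both yield piecewise linear rational functions of polynomial size. The paper's route is much shorter on the page but leaves implicit \emph{why} the pairwise minimum equals the true optimum --- the reason is exactly the monotonicity-of-each-constraint fact you invoke, which forces the maximum of $\rho_t(\cdot)$ over $l$ to be pinned by at most two points of $X$ (or an endpoint of $l$), so that some single pair already witnesses the full optimum. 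Your route makes that structure explicit and in particular makes the $O(|X|^2)$ piece bound transparent, at the cost of redoing the case analysis you correctly flag as the main obstacle; the paper simply inherits that analysis from \cref{lemma:restricted-toy}. Either argument would serve here.
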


\begin{proof}
Take the minimum over applications of \cref{lemma:restricted-toy}
for each (ordered) pair $x_1,x_2$ from $X$,
and apply \cref{lemma:lr-max-min}.
\end{proof}

\begin{lemma}
\label{lemma:restricted-toy-3}
Let $e,f$ be non-crossing line segments in $\R^2$ as before.
Let $X$ be a finite set of points.
Let $C$ be a simple polygon that does not contain any part of $e$, $f$, or $X$ in its interior.
Assume $C$ and each $x\in X$
are all contained within the intersection of a pair of half-planes whose boundaries pass through $e$ and $f$.
Then we can compute a piecewise linear rational function
$R$ which maps a value $t\in[0,1]$ to the maximum value $r\in[0,1]$,
such that there exists some point $y$ in $C$
for which $\cone{e_t,y,f_r}$ does not contain any $x\in X$.
Some values in $[0,1]$
may not have any such valid values,
and we can also compute the ranges of values which are valid.
\end{lemma}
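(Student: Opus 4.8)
The plan is to reduce \cref{lemma:restricted-toy-3} to the single-segment case \cref{lemma:restricted-toy-2}, by covering the relevant guard positions inside $C$ with polynomially many line segments and combining the resulting functions via \cref{lemma:lr-max-min}. Fix $t$ and let $\rho_t(y)$ denote the largest valid $r\in[0,1]$ for a guard at $y$ (undefined if no $r$ works); the goal is to compute $R(t)=\sup_{y\in C}\rho_t(y)$ and its set of valid parameters $t$. The starting point is that $\rho_t(y)=\min_{x\in X}\rho^x_t(y)$, where $\rho^x_t(y)$ is the largest valid $r$ with respect to the single obstruction $x$, and that the ``bad'' region $\mathrm{Bad}_x(r)=\{y:x\in\cone{e_t,y,f_r}\}$ is, for each $r$, a convex cone with apex $x$ whose two boundary rays lie along the line through $e_t$ and $x$ and the line through $x$ and $f_r$. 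A short calculation shows $\rho^x_t(y)$ is the value of $r$ at which $f_r$ crosses the line through $x$ and $y$ (or a value independent of $y$), so $\rho^x_t(y)$ depends on $y$ only through the side of the line through $e_t$ and $x$ that $y$ lies on, and through the direction of $y$ seen from $x$; in particular $\rho^x_t$ is constant along rays out of $x$, attains its maximal value $1$ on an open region, and has no strict local maximum elsewhere.

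The crux is the localization claim: whenever $R(t)<1$, the value $R(t)$ is attained at a point $y^\ast\in C$ lying either on an edge of $C$ or on a chord $C\cap\ell$, where $\ell$ is the line through two distinct points of $X$. Suppose instead $y^\ast\in\mathrm{int}(C)$ with $\rho_t(y^\ast)=r^\ast=R(t)<1$; then $y^\ast$ is an interior local maximum of $\min_x\rho^x_t$, so by the ``no strict local maximum'' property of each $\rho^x_t$ at least two functions $\rho^{x_i}_t,\rho^{x_j}_t$ must be tied and active at $y^\ast$. Examining the two boundary rays of $\mathrm{Bad}_{x_i}(r^\ast)$: $y^\ast$ cannot lie on the ray along the line through $e_t$ and $x_i$ (that would put $x_i$ on the segment $\overline{e_t y^\ast}$ and force $\rho^{x_i}_t(y^\ast)=0$, contradicting $r^\ast>0$), so $y^\ast$ lies on the other ray, which says exactly that $y^\ast$, $x_i$, and $f_{r^\ast}$ are collinear; the same holds for $x_j$, and since $\mathrm{int}(C)$ meets no part of $f$ we have $y^\ast\neq f_{r^\ast}$, so $y^\ast$, $x_i$, $x_j$ are collinear, i.e.\ $y^\ast\in\ell=\mathrm{line}(x_i,x_j)$. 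Crucially each such $\ell$ is independent of $t$, so $C\cap\ell$ is a fixed union of $O(|C|)$ segments, each contained in $C$ (hence in the prescribed pair of half-planes).

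Given the localization, the algorithm runs \cref{lemma:restricted-toy-2} with the full set $X$ on each of the $O(|C|)$ edges of $C$ and on each connected component of $C\cap\mathrm{line}(x_i,x_j)$ over all pairs from $X$ (that is $O(|X|^2|C|)$ segments), obtaining for each a piecewise linear rational function and its range of valid parameters, and then takes the pointwise maximum and the union of the valid ranges via \cref{lemma:lr-max-min}; by the localization this equals $R$ wherever $R<1$. The remaining set $\{t:R(t)=1\}=\{t:\exists y\in C,\ \cone{e_t,y,f_1}\cap X=\emptyset\}$ is a union of polynomially many intervals with first-order-radical endpoints, found by the analogous argument with $r$ fixed at $1$, and $R$ is set to $1$ there. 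The piece counts and bit complexities stay polynomial, since \cref{lemma:restricted-toy-2} contributes $O(|X|^2)$ pieces per segment and \cref{lemma:lr-max-min} only adds description complexities. The main obstacle is the localization step itself: since $\rho_t$ is neither quasi-convex nor quasi-concave in $y$, one cannot simply place the optimum over the non-convex polygon $C$ on $\partial C$ or at a vertex; instead the proof must combine the degenerate radial behavior of each single-obstruction function $\rho^x_t$ (forcing two active constraints at any interior optimum) with the explicit convex-cone form of the bad regions (pinning such an optimum to a line through two obstruction points), and must separately dispatch the boundary value $r=1$, where the monotonicity-in-$r$ argument runs out.
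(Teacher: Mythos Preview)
Your localization-to-chords approach is workable, but it is considerably more elaborate than the paper's argument, and the sentence ``one cannot simply place the optimum over the non-convex polygon $C$ on $\partial C$'' is exactly the point on which you and the paper diverge: the paper shows that one \emph{can}. The paper's proof applies \cref{lemma:restricted-toy-2} only to the edges of $C$ and then argues that a boundary guard always suffices. The key observation you are missing is that the half-plane hypothesis turns cones into triangles. If $v\in\mathrm{int}(C)$ attains $r^\ast$, then inside the half-plane intersection $H$ we have $\cone{e_t,v,f_{r^\ast}}\cap H=\triangle(e_t,v,f_{r^\ast})$. Since $v\in C$ while $e_t\notin\mathrm{int}(C)$, this triangle meets $\partial C$ at some point $y$; and because $y\in\triangle(e_t,v,f_{r^\ast})$, the smaller triangle $\triangle(e_t,y,f_{r^\ast})$ is contained in the larger one. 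All of $X$ lies in $H$, so $\cone{e_t,y,f_{r^\ast}}$ is also $X$-free. Thus the maximum over $C$ equals the maximum over $\partial C$, and the $O(|X|^2|C|)$ chord segments you introduce are unnecessary---only the $O(|C|)$ edges are needed, with no separate handling of the case $r=1$.

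A small technical remark on your write-up: the set $\mathrm{Bad}_x(r)=\{y:x\in\cone{e_t,y,f_r}\}$ is not a single convex cone with apex $x$; it is the \emph{pair} of vertical cones at $x$ bounded by the lines $xe_t$ and $xf_r$ (take $e_t=(-1,0)$, $f_r=(1,0)$, $x=(0,1)$ and check $y=(0,2)$ and $y=(0,-1)$). This does not actually break your ray-constancy claim, since membership in the double cone still depends only on the direction of $y$ from $x$, but the collinearity step (``$y^\ast$ must lie on the $xf_{r^\ast}$ ray rather than the $xe_t$ ray'') then needs a bit more care about which sheet of the double cone $y^\ast$ sits on and whether $\rho^x_t$ is even continuous across the line $xe_t$. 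All of this disappears once you use the triangle-containment argument above.
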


\begin{proof}
Apply \cref{lemma:restricted-toy-2} to each edge of $C$ as $l$,
and apply \cref{lemma:lr-max-min}.
It suffices to show that a guard location on an edge of $C$ suffices
(as opposed to in the interior of $C$).

Fix points $e_t$ and $f_r$ (so that $f_r$ is the optimal choice with respect to $e_t$).
Let $v$ be a valid guard point in $C$
attaining $f_r$,
so that $\cone{e_t,v,f_r}$ does not contain any element of $X$ in its interior.
Since $\cone{e_t,v,f_r}$ intersects $C$,
it must contain within it at least one point $y$ along the boundary of $C$.
If $\cone{e_t,y,f_r}$ covered
any points in $X$ not covered by $\cone{e_t,v,f_r}$,
then the half-plane constraint would be violated, so we are done.
See \cref{fig:visibility-polygon-boundary} for a visualization of this argument.
\end{proof}

\begin{figure}[ht]
\centering
\includegraphics[scale=0.6,page=1]{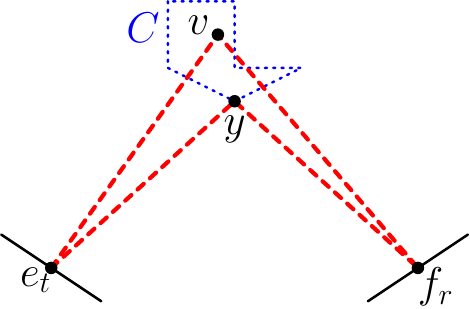}
\caption{A visualization of the argument for \cref{lemma:restricted-toy-3}.}
\label{fig:visibility-polygon-boundary}
\end{figure}

Note that in all of the above series of lemmas, the final range $[a,b]$ may in fact be empty. 
Now we are ready to present the final theorem of this section. A visualization of the lemmas applied here can be found in \cref{fig:art-gallery-fn}.
\begin{theorem}
\label{thm:art-gallery-fn}
Let $P$ be a simple polygon for an instance of \CAG,
with $n=|P|$ and $b$ bits per coordinate.
Then the instance can be reduced to
the (computational) Analytic Arc Cover Problem by way of a piecewise linear rational function $g$,
with polynomially many pieces,
each with $\text{poly}(n,b)$ bits,
and $\text{poly}(n,b)$ bits per endpoint.
\end{theorem}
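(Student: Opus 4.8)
The plan is to parametrize $\partial P$ as the unit circle $S^1$ in the unit-interval representation, placing a boundary point at each of the $n$ vertices, and to let $\mathcal I$ be the (infinite) family of half-open contiguous boundary arcs $[x,y)$ that are \emph{guardable}, i.e.\ entirely visible from a single interior point. First I would check the completeness / ``$\sup=\max$'' property so that \AAC is well posed: by \cref{lemma:suffix-prefix-unnecessary} the set of interior points seeing a contiguous arc equals the intersection of the closed visibility polygons of the finitely many vertices it contains together with those of its two endpoints, and a decreasing family of nonempty compact sets has nonempty intersection, so $[x,g(x))$ is itself guardable; moreover $\mathcal I$ covers $S^1$ since a neighbourhood of any boundary point is visible from that point. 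Hence $g(x)=\sup\{y:[x,y)\in\mathcal I\}$ is a genuine next-generator and we land in the setting of \cref{thm:composition_lrf}; what remains is to exhibit $g$ as a piecewise linear rational function of the stated description complexity.

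To compute $g$ I would work one ordered pair of edges $(e,f)$ at a time ($O(n^2)$ pairs). Fix $e$ and a candidate target edge $f$ reached by walking counterclockwise, and let $V_{e,f}$ be the vertices strictly between them on that walk. For $e_t$ on $e$ and $f_r$ on $f$, an interior point $y$ sees all of the arc $[e_t,f_r)$ exactly when $y\in\bigcap_{v\in V_{e,f}}\mathrm{Vis}(v)$ and the triangle with vertices $e_t$, $y$, $f_r$ lies in $P$, i.e.\ when $\cone{e_t,y,f_r}$ avoids the reflex vertices of $P$ --- precisely the shape of \cref{lemma:restricted-toy-3}, with the guard region in the role of $C$ and the reflex vertices in the role of $X$. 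That lemma then yields a piecewise linear rational function $R_{e,f}(t)$ returning the largest reachable $r$ on $f$, together with the (first-order-radical endpoints of the) sub-interval of $t\in[0,1]$ on which $f$ is reachable at all. Reachability is monotone: if any point of $f$ is reachable from $e_t$, then every earlier edge is fully reachable, so the reachable target edges form a prefix of the counterclockwise walk (this is the function computed pointwise by \cref{lemma:greedy-choice}, now tracked as a function of $t$). Thus $g$ restricted to $e$ is obtained by stitching the $R_{e,f}$ over $f$, taking on each $t$ the function $R_{e,f}$ for the farthest reachable $f$; stitching $O(n)$ pieces over $O(n)$ choices of $e$ and invoking \cref{lemma:lr-max-min} and \cref{lemma:bit_complexity} throughout keeps the number of pieces polynomial in $n$ and every coefficient and radical boundary point of bit complexity $\mathrm{poly}(n,b)$. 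Since coefficients stay rational and only $O(1)$ compositions and min/max operations feed each piece, no radical beyond first order ever appears.

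I expect the main obstacle to be massaging the true guard region into the exact hypotheses of \cref{lemma:restricted-toy-3}, which wants a single simple polygon $C$ such that $C$ and $X$ sit inside the intersection of the two half-planes bounded by the lines through $e$ and $f$ and contain no part of $e$, $f$, or $X$ in their interiors. The set $\bigcap_{v\in V_{e,f}}\mathrm{Vis}(v)$ need be neither connected nor free of these features in its interior, so I would decompose it --- splitting along the lines through $e$ and $f$ and along small neighbourhoods of reflex vertices, then triangulating --- into polynomially many simple sub-polygons each meeting the hypotheses, apply \cref{lemma:restricted-toy-3} to each, and recombine with \cref{lemma:lr-max-min}. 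Justifying that the half-plane restriction is harmless is clean (a guard seeing $e_t$ in the relative interior of $e$ must lie on the inner side of the line through $e$, and likewise for $f$), but it must be combined carefully with the decomposition; and one must verify the decomposition has polynomial size and introduces only first-order-radical breakpoints of polynomial bit complexity. All of these follow from the geometric lemmas already in place together with the bit-complexity bookkeeping of \cref{lemma:bit_complexity} and \cref{lemma:comp-lifted}, after which \cref{thm:composition_lrf} (via \cref{thm:slow-alg}) completes the reduction.
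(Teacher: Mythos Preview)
Your high-level plan matches the paper's: parametrize $\partial P$ edgewise, work one ordered pair $(e_i,e_j)$ at a time, take $C=\bigcap_{v\in V_{e_i,e_j}}\mathrm{Vis}(v)$ (which, via \cref{lemma:suffix-prefix-unnecessary}, equals the paper's ``visibility region of all intermediate edges''), feed $(e_i,e_j,C,X)$ into \cref{lemma:restricted-toy-3}, and stitch with \cref{lemma:lr-max-min}. Where you diverge is in the choice of the obstruction set $X$ and in the visibility characterization you use to justify it, and that divergence is a genuine gap.

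You set $X$ to be \emph{all} reflex vertices of $P$ and assert that $y$ sees the full arc $[e_t,f_r)$ iff $y\in C$ and $\cone{e_t,y,f_r}$ avoids $X$. But any reflex vertex lying on the inside arc $[e_t,f_r)$ itself sits inside that cone for \emph{every} valid guard $y$, so your condition is strictly stronger than guardability and the resulting $R_{e,f}$ would undershoot $g$. (The ``triangle $e_t\,y\,f_r$ lies in $P$'' gloss is separately problematic: the chord $e_tf_r$ need not stay in $P$, so neither direction of that ``i.e.'' holds in general.) The paper's fix is to take $X$ to be the vertices on the \emph{complementary} boundary path $e_{j+1},\dots,e_{i-1}$ that fall inside the wedge bounded by the lines through $e_i$ and $e_j$; this set satisfies the half-plane hypothesis of \cref{lemma:restricted-toy-3} by construction, and the paper then does not claim a clean iff characterization at all --- it proves $f_{i,j}(t)\ge g(t)$ directly (the optimal guard is in $C$ and has no outside-path vertex in its cone) and rules out $f_{i,j}(t)>g(t)$ by a case split on whether an obstructing edge $e_k$ lies on the inside or outside path. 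With this choice of $X$ the decomposition you worry about for the half-plane hypothesis becomes unnecessary, so your proposed triangulate-and-recombine step can be dropped once $X$ is corrected.
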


\begin{proof}
Let the edges of the polygon be $e_0,e_1,\dots,e_{n-1}$.
We can create a mapping of a counterclockwise traversal of these
edges onto $[0,1)$
which itself induces a (circular) traversal of $S^1$ when $[0,1)$ is mapped onto it.
The choice of mapping is unimportant,
e.g. map the traversal of $e_i$ onto the interval $\left[\frac in,\frac{i+1}n\right)$.
Let $g:\R\to\R$ be the next-generator function for this traversal.
That is, $g$ takes a value $t\in[0,1)$,
proof, we will assume $e_i$ is fixed,
since the function can be defined piecewise for each $e_i$.

For each segment $e_j$ we can first check if
there is any part of $e_i$ that $g$ maps to
$e_j$
with \cref{lemma:greedy-choice}.
If so:
Let $X$ be the set of vertices of $P$
along the edges $e_{(j+1)\pmod n},%
\dots,e_{(i-1)\pmod n}$
which are found in the cone formed half-planes whose boundaries overlap $e_i$ and $e_j$
directed into the polygon.
Let $C'$ be the visibility polygon(-al region) that can ``see''
\emph{all} the edges
$e_{(i+1)\pmod n},%
\dots,e_{(j-1)\pmod n}$
(or the shared vertex of $e_i$ and $e_j$ if $j=(i+1)\pmod n$).
Let $C$ be the intersection of $C'$
with the two half-planes whose boundaries pass through each of $e$ and $f$,
with interiors directed into the polygon interior.
Finally, apply \cref{lemma:restricted-toy-3}
on $e=e_i, f=e_j, X=X, C=C$
to obtain a piece-wise linear rational function $f_{i,j}$
defined on some sub-interval of $e_i$.

Let $f_i$ be the function which takes the maximum values
over all applicable $f_{i,j}$ via \cref{lemma:lr-max-min},
and let $f$ be the piecewise function that applies $f_i$ to the $i$th edge.
We claim that, when scaled to the appropriate bounds for $e_i$,
$f$ agrees with
$g$ over $e_i$.
Let $p$ and $q$ be points along $e_i$ and $e_j$ respectively,
so that $g$ maps $p$ to $q$.
Let $y$ be any optimal guard point
(that is, a guard point that
contains the boundary counter-clockwise from $p$ to $q$ 
in its visibility region).
By definition, $y$ must be inside $C$.
Moreover, no point in $X$ can be contained in the cone
formed by $p,y,q$,
since the presence of such a point would violate
the requirement that the lines $\overline{py}$ and $\overline{qy}$
are contained fully within the polygon
(i.e., that the pairs $p,y$ and $q,y$ are each mutually visible).
Hence, $f(t)\geq g(t)$.

Finally, suppose for contradiction
that the guard location $y'$ produced for $f(t)$
did not contain the entire path
from $p$ to $q$ in
its visibility polygon.
Since $y'$ is inside $C$,
\cref{lemma:suffix-prefix-unnecessary}
gives that $y'$ must not contain one of $p$ or $q$
within its visibility polygon.
Without loss of generality assume that it is $p$.
Then the line segment from $y'$ to $p$
must intersect the exterior of $P$,
and hence must intersect some edge $e_k$ of $P$.
We obtain two cases:
\begin{itemize}
    \item If $e_k$ is in $e_{(i+1)\pmod n},\dots,e_{(j-1)\pmod n}$, then
        assume without loss of generality that $(k-i)\pmod n$ is minimal.
        Since $y'$ is in $C$,
        $y'$ is in the visibility polygon of $e_k$,
        and hence $e_k$ has a particular orientation at the crossing point with $\overline{py'}$.
        The minimality of $e_k$ implies that
        $p$ has the opposite orientation.
        Hence $y'$ is not in the half-plane induced by $e_i$,
        and hence not in $C$,
        a contradiction.
    \item If $e_k$ is in $e_{(j+1)\pmod n},\dots,e_{(i-1)\pmod n}$, then
        since no point in $X$ is in
        $\cone{p,y',q}$,
        $e_k$ must cross \emph{both} $\overline{py'}$ and $\overline{qy'}$.
        Hence, $y'$ does not contain \emph{any} of $e_{(i+1)\pmod n},\dots,e_{(j-1)\pmod n}$
        (nor the shared endpoint of $e_i$ and $e_j$ if $j=(i+1)\pmod n$)
        within its visibility region. Thus, $y'$ is not in $C$,
        a contradiction.
\end{itemize}
See \cref{fig:contradiction-cases} for visualizations of each argument.
With this, we have completed the proof.
\end{proof}

\begin{figure}[ht]
\centering
\includegraphics[scale=0.6,page=2]{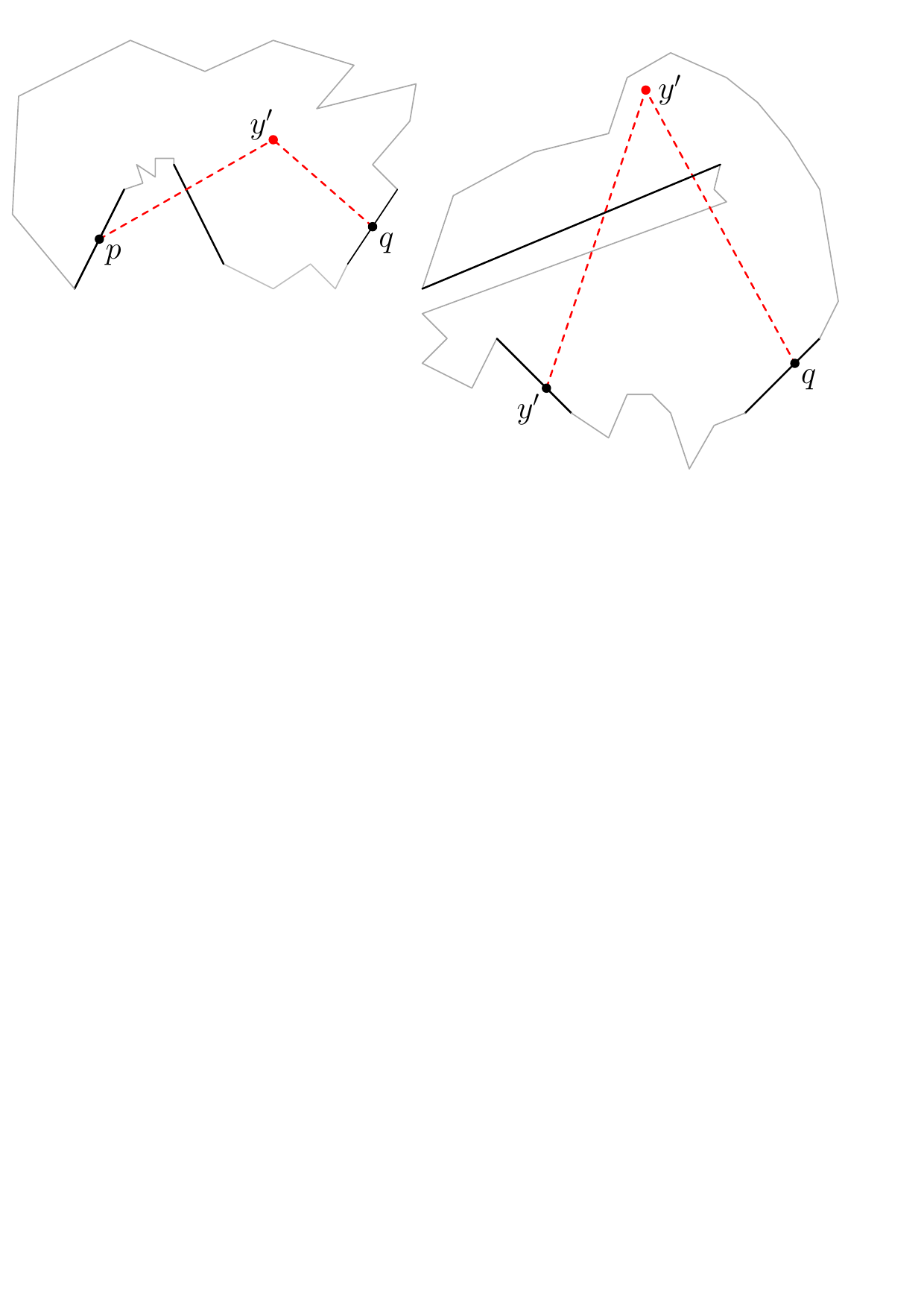}
\hspace{2em}
\includegraphics[scale=0.45,page=3]{contradiction-cases}
\caption{Visualizations of the cases used for the final step in the proof for \cref{thm:art-gallery-fn}.}
\label{fig:contradiction-cases}
\end{figure}

\begin{figure}[ht]
\centering
\includegraphics[scale=0.8,page=2]{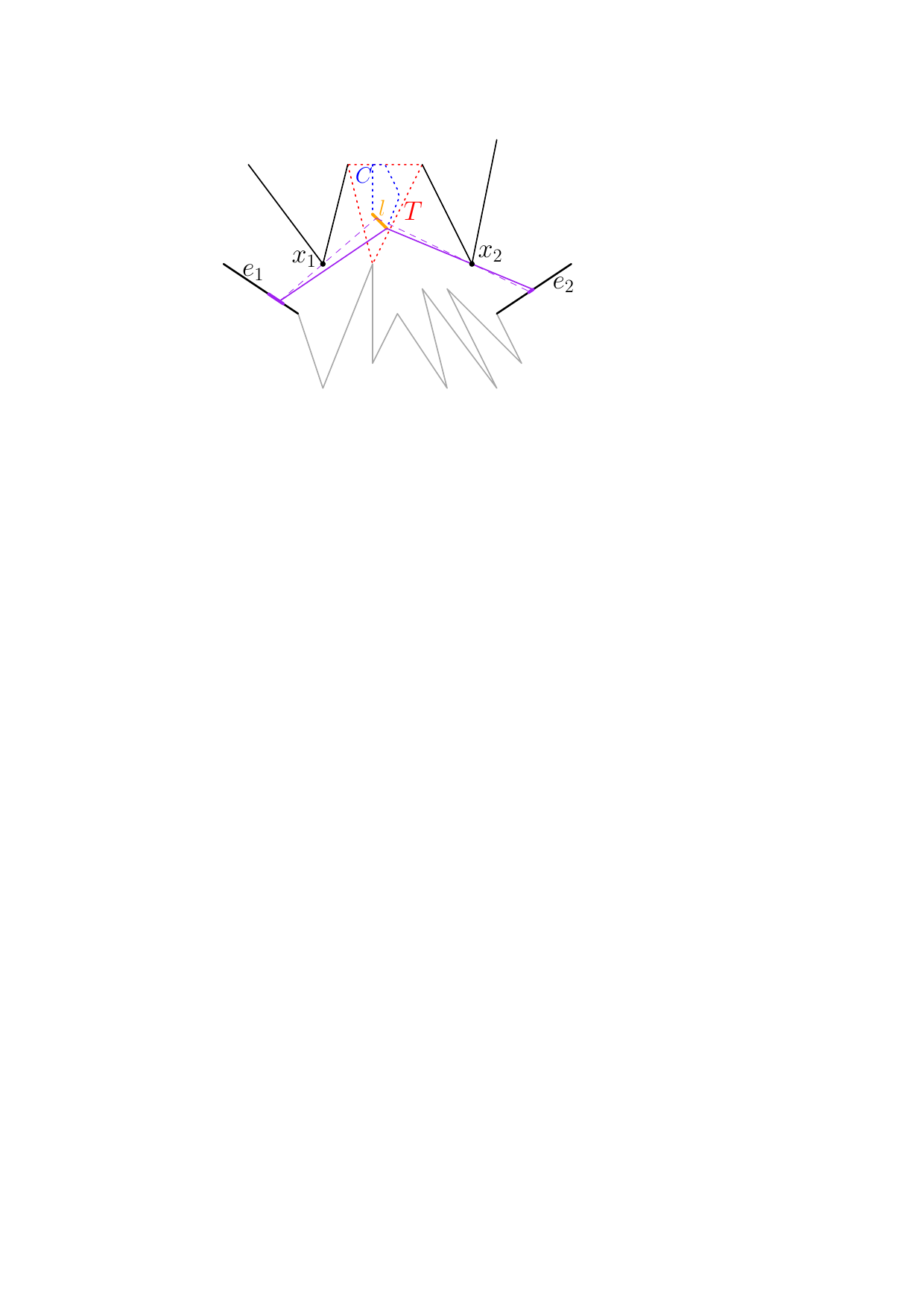}
\caption{A visualization of the construction for the piecewise rational function used for \CAG.}
\label{fig:art-gallery-fn}
\end{figure}

As a consequence of this proof, we obtain \cref{thm:P-cag} by applying \cref{thm:composition_lrf}.

\begin{remark}
\label{rmk:holes}
In the concurrent work by Merrild, Rysgaard, Schou, and Svennin~\cite{MerrildRRS24},
it was also proposed that the case of Polygons with holes may be interesting.
That is, there are polygonal obstacles within the polygon blocking the sightlines of the guards.
Our approach extends to this case without much additional work:
\begin{itemize}
    \item \cref{lemma:suffix-prefix-unnecessary} no longer applies, so its uses have to be updated.
    \item \cref{lemma:greedy-choice} needs a different argument not using \cref{lemma:suffix-prefix-unnecessary}, but known results exist for this that also respect holes~\cite{SuriO86}.
    \item $X$ must include the vertices along these holes.
    \item The proof by contradiction step at the end of the proof of the above theorem needs to be modified. In particular, the WLOG choice of $p$ must be replaced by some interior point of $e_i$ or $e_j$ between $p$ and $q$ along the boundary.
    The second case also needs to also consider the edges along the holes.
\end{itemize}
We can also easily handle the case of covering one of these ``holes'' (rather than the outer boundary) via the same argument.
\end{remark}

\section{The Polygon Separating Problem}
\label{sec:sep}

In this section, we will study \SegS. 
Recall that \SegS considers separating two sets of line segments with a minimum-vertex convex polygon.
We can also consider the problem \SegPolyS, where we wish to separate a convex polygon from a collection of line segments with a minimum-vertex polygon (not necessarily convex).
In fact, these two problems are equivalent.
First, note that a convex polygon contains a set of line segments
if and only if it contains their convex hull,
so \SegS is equivalent to separating a set of line segments
from an ``inner'' convex polygon.
Moreover, it is also easy to check feasibility of a problem instance as a result
--- one only needs to verify that the convex hull of the ``inner'' line segments
does not contain any part of any of the ``outer'' line segments in its interior.
We can make another observation for \SegPolyS:

\begin{theorem}
Let $P$ be a convex polygon
contained by a polygon $Q$.
If $Q$ is non-convex,
there is another convex polygon $Q'$
with at most as many sides as $Q$
such that $P\subset Q'\subset Q$.
\end{theorem}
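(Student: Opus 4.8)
The plan is an induction on the number of reflex vertices of $Q$. If $Q$ is convex, take $Q'=Q$; otherwise fix a reflex vertex $v$, cut $Q$ with a carefully chosen closed half-plane $H\supseteq P$, let $Q_1$ be the connected component of $\overline{Q\cap H}$ containing $P$, and recurse on $Q_1$. The point is that once we know $Q_1$ is a simple polygon with $P\subseteq Q_1\subseteq Q$ (note $\mathrm{int}(P)\subseteq\mathrm{int}(Q_1)$ since $P$ is convex), with strictly fewer reflex vertices than $Q$, and with $|Q_1|\le|Q|$, the inductive hypothesis gives a convex $Q'$ with $P\subseteq Q'\subseteq Q_1\subseteq Q$ and $|Q'|\le|Q_1|\le|Q|$, which is what we want.

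Choosing $H$ so that the reflex count drops is the easy part. Since $v\in\partial Q$ while $\mathrm{int}(P)\subseteq\mathrm{int}(Q)$, we have $v\notin\mathrm{int}(P)$, so the lines through $v$ that are disjoint from $\mathrm{int}(P)$ form a nonempty family; let $\ell$ be any such line and $H$ the closed side containing $P$. Every vertex of $Q_1$ lying on $\ell$ has interior angle equal to the local wedge of $Q$ at that point intersected with the half-plane $H$, which has measure at most $180^\circ$ — and at $v$ this intersection is nonempty because the local wedge of $Q$ at $v$ has measure exceeding $180^\circ$. Hence $v$ is either cut off entirely or survives on $\partial Q_1$ as a non-reflex vertex; cutting with a half-plane never creates a reflex vertex (the new vertices also lie on $\ell$, so the same computation applies); and any reflex vertex on the discarded side simply vanishes. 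Thus $Q_1$ has strictly fewer reflex vertices than $Q$, for \emph{every} admissible $\ell$.

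What remains, and is the crux, is to choose $\ell$ within the admissible family so that $|Q_1|\le|Q|$: a poor cut can replace a long sub-chain of $\partial Q$ by several short segments along $\ell$ and inflate the count. In the clean case, where $\ell$ meets $\partial Q$ at $v$ and exactly one further point $b$, the chord $[v,b]$ splits $Q$ into $Q_1$ and a second polygon $Q_2$; counting vertices gives $|Q_1|+|Q_2|=|Q|+3$ with $|Q_2|\ge 3$, so $|Q_1|\le|Q|$. Accordingly I would try to pick $\ell$ so that it immediately leaves $Q$ on both sides of $v$ — for instance a small rotation about $v$ of the line through an edge incident to $v$, or (when $v\in\partial P$) a supporting line of $P$ at $v$ — which typically yields exactly this single further crossing. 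When a single further crossing cannot be arranged, the argument must be completed by a more careful accounting: the segments of $\ell$ appearing on $\partial Q_1$ are charged against the pieces of $Q\setminus\ell$ they border (a non-degenerate polygon always has a vertex off $\ell$, hence a vertex of $Q$ absent from $Q_1$), and one must show these charges balance against the vertices $Q_1$ does inherit.

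I expect this vertex-count bookkeeping — ruling out a comb-like blow-up in the number of $\ell$-segments and making the charging genuinely injective — to be the real effort; the geometric core (existence of a separating line at $v$, and the automatic loss of reflexness upon cutting) is short. Two instructive sanity checks to keep in mind are an $L$-shaped $Q$ with $P$ a triangle through the reflex corner, where a single cut already gives $Q'=P$, and an inverted-$T$ pocket with a thin $P$ in a side notch, where cuts along the pocket's edge-lines suffice.
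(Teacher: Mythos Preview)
Your inductive scaffold is the same as the paper's---pick a reflex vertex $v$, cut with a half-plane containing $P$, keep the piece containing $P$, recurse---but the choice of cutting line differs. You send $\ell$ \emph{through} $v$; the paper instead takes a supporting line $L$ of $P$ that strictly \emph{separates} $v$ from $\mathrm{int}(P)$ (so $v$ lies on the far side of $L$, not on it), and keeps the cell of the arrangement $\partial Q\cup L$ containing $P$. The paper then simply asserts that this cell has at most as many sides as $Q$ and at least one fewer vertex, and iterates---it never engages with the side-count bookkeeping you correctly identify as the crux.

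Your gap is real, and there is a slip just before it. The slip: a line through a \emph{reflex} vertex $v$ cannot ``immediately leave $Q$ on both sides of $v$'', because the exterior angle at $v$ is strictly less than $\pi$, so at least one ray of $\ell$ from $v$ must enter $\mathrm{int}(Q)$; your heuristic for forcing the single-further-crossing clean case is therefore not available as written. Moreover, the clean case can genuinely fail to be achievable under your constraint: if $P$ sits inside the reflex angle so that both edge-extension rays from $v$ pass through $\mathrm{int}(P)$ (take e.g.\ $v$ at the origin with edges along the positive axes and $P$ a small convex set centred near $(-1,-1)$ of diameter just over $\sqrt{2}$), then every line through $v$ avoiding $\mathrm{int}(P)$ has both of its rays in the interior wedge at $v$, hence enters $\mathrm{int}(Q)$ on both sides, and you are forced into the multi-crossing regime whose charging you have not carried out. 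The paper's tangent-of-$P$ choice decouples the cut from the local geometry at $v$ and gives a full one-parameter family of candidate cuts, which is what lets its very brief argument go through.
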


See \cref{fig:convex-nonconvex-equivalence} for an example.
This theorem is analogous to a similar theorem proved by~\cite{YapABO89},
although we require a different (but still elementary) proof.

\begin{figure}[ht]
\centering
\includegraphics[scale=0.55,page=3]{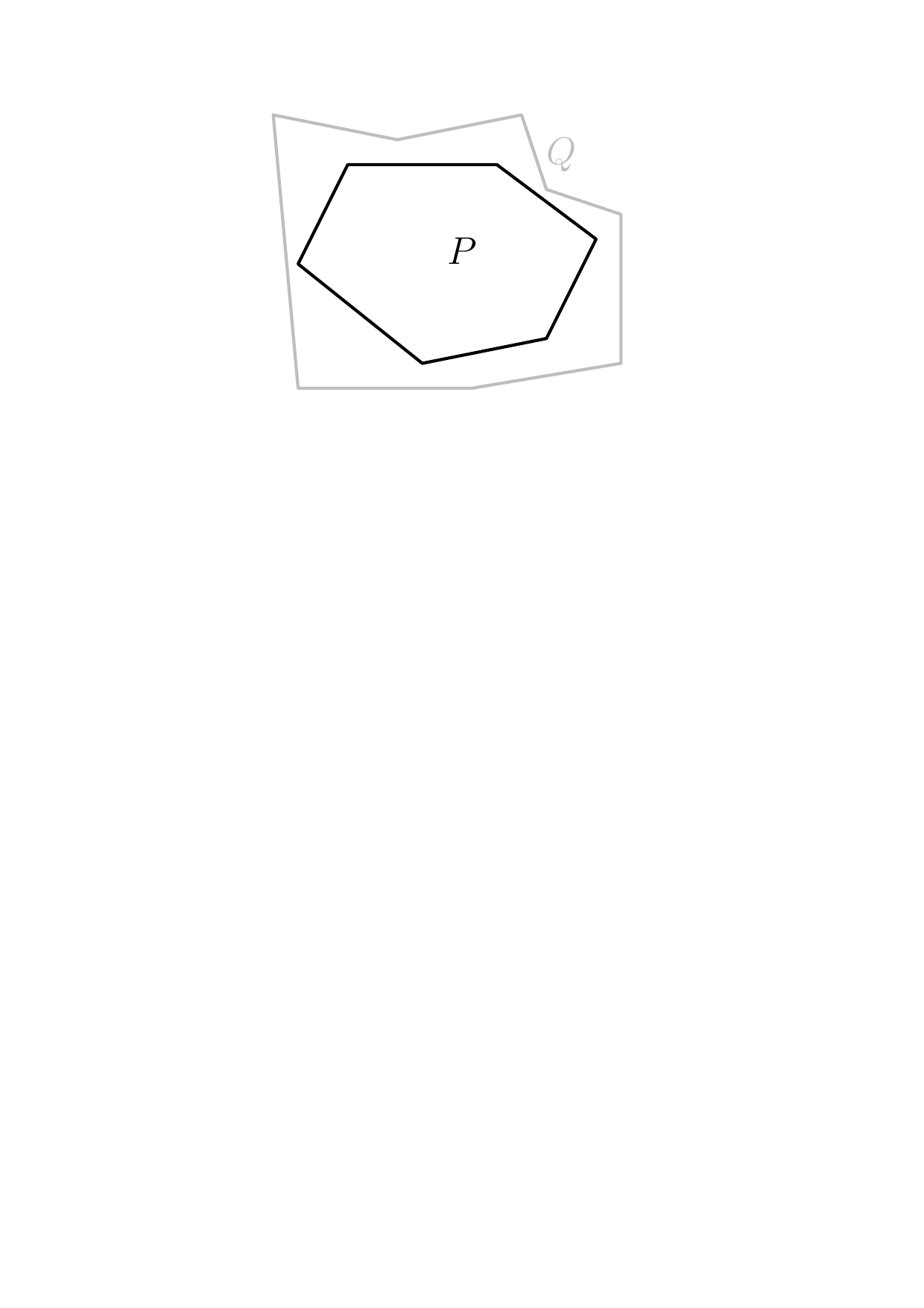}
\hspace{2em}
\includegraphics[scale=0.55,page=4]{convex-nonconvex-equivalence}
\caption{A construction for the equivalence between convex and non-convex separating polygons.}
\label{fig:convex-nonconvex-equivalence}
\end{figure}

\begin{proof}
Consider a reflex vertex $v$ of $Q$.
Take some tangent line $L$ of $P$
separating $v$
from the interior of $P$.
$Q$ and $L$ together
separate $\R^2$ into cells,
one of which contains $P$.
The boundary of that cell is then a smaller polygon containing
$P$
with at most as many sides as $Q$,
at least one fewer vertices.
Repeat until a polygon $Q'$ with no reflex vertices is found.
\end{proof}

This gives us an immediate corollary:

\begin{corollary}
\SegS is in \P~if and only if \SegPolyS~is in \P.
\end{corollary}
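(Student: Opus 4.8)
The plan is to establish polynomial-time reductions in both directions between \SegS and \SegPolyS (for the decision versions, equivalently the optimization versions), using the two structural facts already recorded: that a convex polygon contains a set of segments if and only if it contains their convex hull, and the theorem just proved, that a non-convex separator of a convex polygon can always be replaced by a convex one with no more vertices.

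First I would handle the direction ``\SegS in \P $\Rightarrow$ \SegPolyS in \P''. Given a \SegPolyS instance consisting of a convex polygon $P$ together with a set $S_2$ of line segments to be avoided (the ``outer polygon'' being a special case where $S_2$ is its edge set), take $S_1$ to be the set of edges of $P$. Since $P$ is convex, a convex polygon contains all of $S_1$ if and only if it contains $P$; hence the minimum-vertex \emph{convex} polygon separating $S_1$ from $S_2$ is exactly the minimum-vertex convex separator of $P$ from $S_2$. By the theorem above, the minimum over all (not necessarily convex) separating polygons equals this convex minimum, so the \SegS answer on $(S_1,S_2)$ equals the \SegPolyS answer on $(P,S_2)$. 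This transformation is trivially polynomial-time.

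For the converse, ``\SegPolyS in \P $\Rightarrow$ \SegS in \P'', given a \SegS instance $(S_1,S_2)$, compute $P := \CH{S_1}$, a convex polygon obtainable in polynomial time with polynomially bounded bit complexity. Again a convex polygon contains $S_1$ iff it contains $P$, so \SegS on $(S_1,S_2)$ has the same optimum as the minimum-vertex convex separator of $P$ from $S_2$; and by the theorem this coincides with the minimum-vertex (possibly non-convex) separator, i.e.\ the \SegPolyS answer on $(P,S_2)$. If one wants an actual witness polygon rather than just the optimal count, a non-convex separator returned by \SegPolyS can be converted to a convex one of no larger size by iterating the reflex-vertex-elimination construction from the proof of the theorem above: each step cuts off one reflex vertex using a tangent line to $P$, runs in polynomial time, and keeps coefficient sizes polynomial.

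I do not expect any single hard step here; the statement is essentially bookkeeping on top of the theorem. The only points requiring a little care are (i) confirming that the reductions preserve the optimum exactly --- which is precisely where the ``contains the segments iff contains their convex hull'' observation and the theorem are invoked --- and (ii) checking that all geometric primitives involved (convex hulls, and the tangent lines used in reflex-vertex elimination) are computable in polynomial time with polynomially bounded bit complexity, so that the reductions are genuinely polynomial-time in the Turing-machine model and not merely in the real-RAM model.
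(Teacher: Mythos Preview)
Your proposal is correct and follows essentially the same approach as the paper: both arguments hinge on the two structural facts already in place (convex containment of segments $\Leftrightarrow$ containment of their convex hull, and the reflex-vertex-elimination theorem), and the paper simply compresses your two reductions into the single observation that both problems are equivalent to the common intermediate problem \SegCPS of separating an inner convex polygon from outer segments by a convex polygon. Your version is more explicit about the Turing-machine bit-complexity considerations and about recovering a convex witness, but the underlying idea is the same.
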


This is because both problems are equivalent to the problem of separating a set of ``outer'' line segments
from an ``inner'' convex polygon with a separating convex polygon.
We will henceforth refer to the resulting problem \SegCPS.
Note that without loss of generality, it can be assumed that each side of the convex polygon
must touch the boundary of the inner polygon (follows from a standard separating line argument).
See \cref{fig:carving-example} for an example of a \SegCPS.

\begin{figure}[ht]
\centering
\reflectbox{\includegraphics[scale=0.55,page=9]{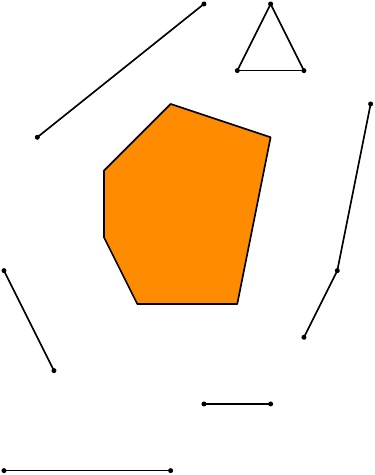}}
\hspace{2em}
\reflectbox{\includegraphics[scale=0.55,page=10]{min-carving-off-line-segments}}
\hspace{2em}
\reflectbox{\includegraphics[scale=0.55,page=11]{min-carving-off-line-segments}}
\caption{An instance of \SegCPS, and two potential solutions (one sub-optimal).%
}
\label{fig:carving-example}
\end{figure}

If we knew one of the sides of the optimal solution, we could go counterclockwise around the inner polygon and greedily take the next edge that covers as many uncovered segments as possible. This approach would work for \PointS as it can be shown that there exists an edge that passes through two input points. However, for \PS and \SegS, we are unable to prove that this is the case. Indeed, the algorithm of \cite{YapABO89} for \PS also accounts for this possibility.

\begin{theorem}
Consider an instance of the minimum carving by line segments problem
with a convex hull of size $m$, $n$ line segments,
and $\leq b$ bits per coordinate.
Then, it can be reduced to \AAC,
whose next-generator
is a
piecewise linear rational function of two dimensions $g$,
whose domain is the unit circle $S^1$ embedded in $\R^2$.
$g$ has polynomially many pieces,
polynomially many bits per linear rational function,
and polynomially many bits per boundary point.
\end{theorem}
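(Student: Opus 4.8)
The plan is to parametrize the sides of the candidate separating convex polygon by the point of $S^1$ giving the outward normal direction of each side, and then to build the next-generator $g$ so that $g(x)$ is the outward-normal direction of the ``farthest'' next side one can use after committing to a side with normal $x$. Concretely, fixing a direction $x \in S^1$, the most aggressive tangent line to the inner convex polygon $P$ with outward normal $x$ is the supporting line of $P$ in direction $x$; this line touches $P$ at a vertex or an edge, and its location is a piecewise linear rational function of the embedding of $x$ in $\R^2$ (the pieces correspond to the normal cones of the vertices of $P$, whose boundary rays are determined by the edge directions of $P$). After fixing this side $\ell_x$, we sweep the normal direction counter-clockwise to a new direction $y$; the new supporting line $\ell_y$ of $P$ together with $\ell_x$ bounds a wedge, and the constraint is that every outer segment that was already ``cut off'' (separated) by $\ell_x$ must remain on the correct side of the chain, while we want to advance $y$ as far as possible. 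The value $g(x)$ is the supremum of directions $y$ such that the two half-planes $H_x, H_y$ (bounded by $\ell_x, \ell_y$, containing $P$) together with the portion of the boundary already built still form a valid partial separator --- i.e. no outer segment is crossed, and every outer segment ``assigned'' to this step lies outside $H_x \cap H_y^c$ appropriately. This is exactly the \AAC formulation: a minimum set of directions $X$ such that the arcs $[x, g(x))$ cover $S^1$ corresponds to a minimum-vertex convex separating polygon.

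The key computational claim is that $g$, expressed in the ray representation of $S^1$ in $\R^2$, is a two-dimensional piecewise linear rational function with $b=\bar d=\bar 0$ (so that rays map to rays), with polynomially many pieces and polynomially many bits. I would prove this by the same template used for \CAG in \cref{thm:art-gallery-fn}: for each outer segment $s$ and each edge or vertex of $P$, the event ``the supporting line in direction $x$ together with the supporting line in direction $y$ just touches endpoint of $s$'' is a single bilinear-type equation, and solving for the boundary $y = g_s(x)$ as a function of $x$ gives a linear rational function of the $\R^2$-representative of $x$ on each sub-arc (since the supporting-line location is already linear rational in $x$, and imposing incidence with a fixed point is again linear rational after the ray normalization; here I would invoke the closure of linear rational functions under composition). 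Then $g$ is the pointwise minimum of all these $g_s$ over all outer segments $s$ (each segment can only restrict how far we advance), together with the pieces coming from the normal-cone subdivision of $P$, and we take this minimum using \cref{lemma:lr-max-min}. Counting: there are $O(n)$ segments, $O(m)$ normal cones, and each pair contributes $O(1)$ boundary points and $O(1)$ pieces, so $g$ has $\mathrm{poly}(n,m)$ pieces; the bit complexity of each piece and each boundary point is $\mathrm{poly}(b)$ by \cref{lemma:bit_complexity}, because every quantity is obtained from input coordinates by $O(1)$ linear rational operations and at most one square root (the boundary points where two supporting-line incidences coincide may be first-order radicals, exactly as in \cref{lemma:lr-max-min}).

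The main obstacle I expect is verifying correctness of the greedy/next-generator semantics --- that is, proving that taking the farthest valid next direction never hurts, and that a minimum arc cover really does correspond to a minimum separating polygon (in particular that one may assume every side of the optimal polygon is a supporting line of $P$, which the excerpt already notes follows from a separating-line argument, and that the ``assignment'' of outer segments to sides respects the cyclic order so that the arc-cover abstraction is faithful). The subtle point is that an outer segment may force the chain to turn before a given side can be extended, and one must check that this forcing is captured exactly by the pointwise minimum of the $g_s$ and does not interact across non-consecutive sides; I would handle this with a contradiction argument analogous to the two-case analysis at the end of the proof of \cref{thm:art-gallery-fn}, showing that if a segment assigned to a later side were crossed, then either the half-plane constraint of an earlier side is violated or the convexity/cyclic-order assumption fails. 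Everything else --- the linear-rational bookkeeping, the piece and bit counts, and the final invocation of \cref{thm:composition_lrf} to conclude membership in \P --- is routine.
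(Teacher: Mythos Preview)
Your high-level plan matches the paper's: parametrize sides of the separator by outward normal direction in the ray representation, define $g(x)$ as the farthest next normal, and realize $g$ as a pointwise minimum of per-segment constraint functions via \cref{lemma:lr-max-min}. The piece count and bit-complexity bookkeeping you sketch are also in line with what the paper does.

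There is, however, a genuine gap in your constraint functions $g_s$. You only treat the event ``$\ell_y$ touches an \emph{endpoint} of $s$,'' which is a constant direction independent of $x$. What is missing is the case where the current supporting line $\ell_x$ already cuts $s$ at an interior point $z=z(x)$, leaving a partial segment inside $H_x$ that must be handed off to the next side. In that situation the binding constraint is that $\ell_y$ pass through $z(x)$ (the paper calls this the ``greedy reflection''), and $z$ itself is a nonconstant linear rational function of the ray $x$: explicitly $z=s_1+\dfrac{x\cdot(p_1-s_1)}{x\cdot(s_2-s_1)}(s_2-s_1)$, and then the next normal is the $90^\circ$ rotation of $p_2-z$ for the appropriate tangent vertex $p_2$ of $P$. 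This is exactly the mechanism that distinguishes \SegS from \PointS (for which endpoint-through tangents suffice) and is the source of the genuinely two-dimensional linear rational dependence on $x$; without it your $g$ would be piecewise constant and would not solve the problem. The paper also uses a finer piece decomposition than just the normal fan of $P$: the boundary directions are all tangents of $P$ through segment endpoints together with the edge directions of $P$, so that within each piece the set of segments currently crossed by $\ell_x$ \emph{and} the relevant tangent vertex $p_2$ for each segment are fixed. Once you add the reflection case and this refined decomposition, your argument coincides with the paper's.
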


Unlike the art gallery problem we discussed,
it is not as intuitive how this problem can be reduced to the arc cover problem.
The key idea here is that we will map cuts
to points on $S^1$
(through the ray representation in this case).
Specifically, the normal vector of each cut (where the cut is treated as a half-plane away from the inner polygon)
can be mapped to a point on the unit circle with the standard embedding in $\R^2$.
\begin{figure}[ht]
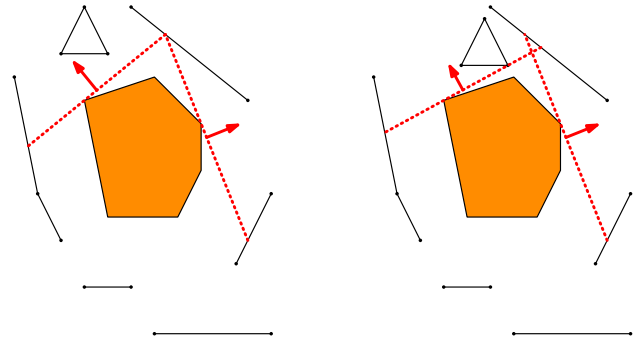

\centering
\reflectbox{\includegraphics[scale=0.55,page=5]{min-carving-off-line-segments}}
\hspace{3em}
\reflectbox{\includegraphics[scale=0.55,page=6]{min-carving-off-line-segments}}
\caption{The two types of ``sequential'' cuts, along with their normal vectors.}
\label{fig:carving-example-directions}
\end{figure}

\begin{figure}[ht]
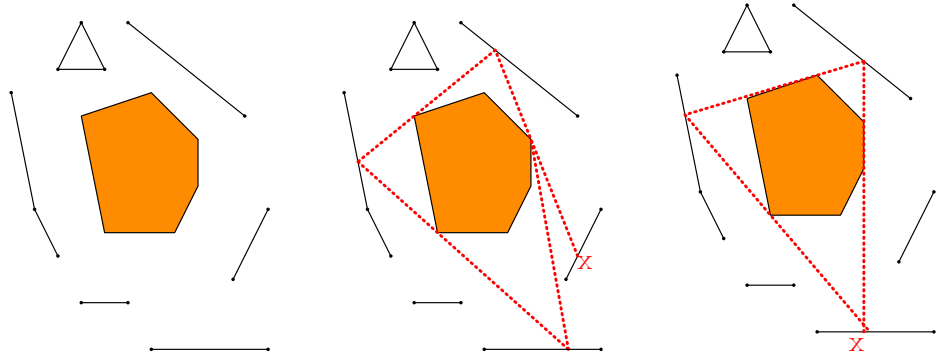

\centering
\reflectbox{\includegraphics[scale=0.55,page=1]{min-carving-off-line-segments}}
\hspace{2em}
\reflectbox{\includegraphics[scale=0.55,page=3]{min-carving-off-line-segments}}
\hspace{2em}
\reflectbox{\includegraphics[scale=0.55,page=4]{min-carving-off-line-segments}}
\caption{An instance of \SegCPS, and two greedy cut sequences from different starting points $x$.
These are the same resulting polygons as \cref{fig:carving-example}.%
}
\label{fig:carving-example-2}
\end{figure}

\begin{proof}
For any cut $C$, sweep counter-clockwise starting from $C$,
with these events:
\begin{enumerate}
    \item An uncovered segment or partial segment begins intersecting the half-plane.
    \item An uncovered segment or partial segment is about to stop intersecting the half-plane.
\end{enumerate}
Terminate the sweep on the first instance of the second kind.
The resulting ``next'' cut takes on two possible forms (
see \cref{fig:carving-example-directions} for an example of each):
\begin{itemize}
    \item The other tangent line from a point along the first line segment intersected by $C$.
        We will refer to this as a greedy reflection.
    \item A tangent line through the endpoint of some other line segment.
\end{itemize}

Our reduction is as follows:
Let $S$ be the set of all lines passing through both a line segment endpoint, $p$ or $q$ for
every line segement $pq$, and another vertex $r$ of the inner polygon.
Let $T$ be the set of tangents into the half-planes induced by each line in $S$
as well as the set of lines $L_{P}$ extending the polygon.
The set $T$ forms the boundaries of our ``outer-most'' pieces, and the next generator at these
boundaries can be computed directly.

Next, within the open interval inside each of these pieces,
there is a unique vertex of the polygon through which we consider the crossing half-planes.
That is, there is one current vertex around which a line is being swept.
Moreover, the set of intersected line segments remains constant
as one sweeps around this vertex within the piece (by definition).
We need only specify the next-generator $g$ within each piece.
Rather than carefully handle the cases in \cref{fig:carving-example-directions},
we take the first of all potential next half-planes
required to fully cover each line segment not covered by the current half-plane.
This can be performed with a minimum operation after taking an appropriate branch cut,
and the resulting function is itself a piecewise linear rational function by \cref{lemma:lr-max-min}.

In particular, there are two types of valid choices of next half-plane:
Either the next half-plane intersects a segment at the same location (but from ``behind''),
or it starts from an endpoint of a fully uncovered line segment.
In either case, it suffices to find the vertex of the inner polygon $P$
for which the corresponding tangent line is formed,
since given that point we can compute
a 2D linear rational function giving the (not-necessarily-unit) normal vector of the next half-plane
subject to the constraints of the ray representation
(see \cref{fig:next-half-plane-formula} for this derivation).
For the endpoints, the next tangent vertex can be computed ahead of time.
For the segments, the next tangent vertex choice
can be also computed as a mapping of partitions of the segment.
Each of these take polynomial time to compute.

We have described the entire linear rational function.
Moreover, the reduction itself takes polynomial time,
so we are done.
\end{proof}

\begin{figure}[ht]
\centering
\includegraphics[scale=0.6]{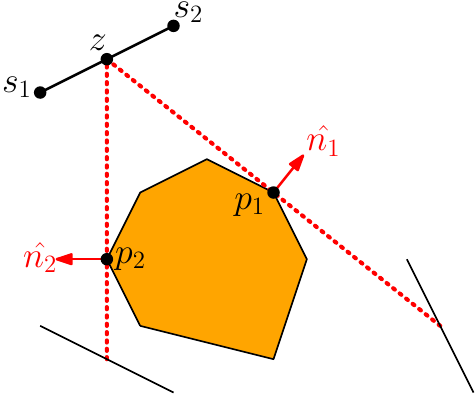}
\caption{A linear rational function for the next half-plane by its normal vector $\hat{n_2}$ from $\hat{n_1}$, with $p_1,p_2,s_1,s_2$ all fixed.
In particular, $z=s_1+\frac{\hat{n_1}\cdot(p_1-s_1)}{\hat{n_1}\cdot(s_2-s_1)}(s_2-s_1)$
and $\hat{n_2}=\left((p_2-z)_y,-(p_2-z)_x\right)$.%
}
\label{fig:next-half-plane-formula}
\end{figure}

By \cref{thm:composition_lrf}, we
have proven \cref{thm:P-seg-s} and have also shown \SegPolyS and \SegCPS to be in \P.

\section{Optimal 3D Half-Plane Carving}
\label{sec:carving}

Recent work by Robson, Spalding-Jamieson, and Zheng~\cite{RobsonSJZ2024} has presented
an algorithm for determining if a 3D polytope can be ``carved''
with half-plane cuts.
That is,
the algorithm decides if there exists a set $H$ of half-planes
such that the polytope $P$ is a connected component of
$\R^3\setminus\left(\bigcup_{H\in\mathcal H}H\right)$.
Examples of ``YES'' and ``NO'' instances for this problem can be found in
\cref{fig:3d-carving-example-yes} and \cref{fig:3d-carving-example-no}.
This solved an open problem posed by Demaine, Demaine, and Kaplan~\cite{DemaineDK01},
who considered a 2D variant of this problem
utilizing line segment cuts.
The latter work also posed another open problem in their 2D model:
Does there exist an algorithm to minimize the number of cuts.
In this section, we will solve this problem for the 3D model with half-plane cuts.
That is, we will compute a minimum-sized set of half-planes $\mathcal H$ carving $P$.

\begin{figure}[ht]
\centering
\includegraphics[scale=0.15,page=1]{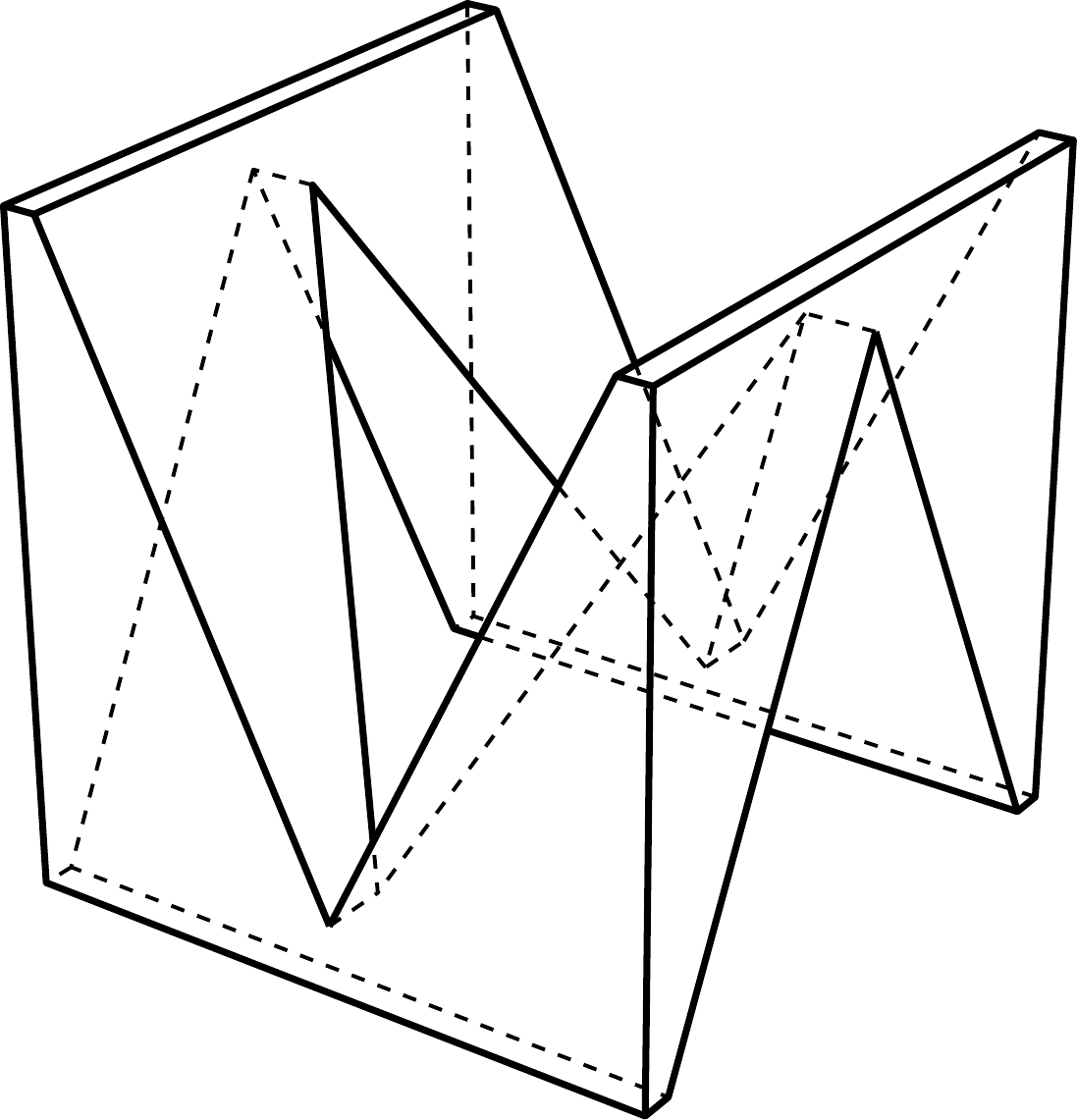}
\hspace{1em}
\includegraphics[scale=0.15,page=1]{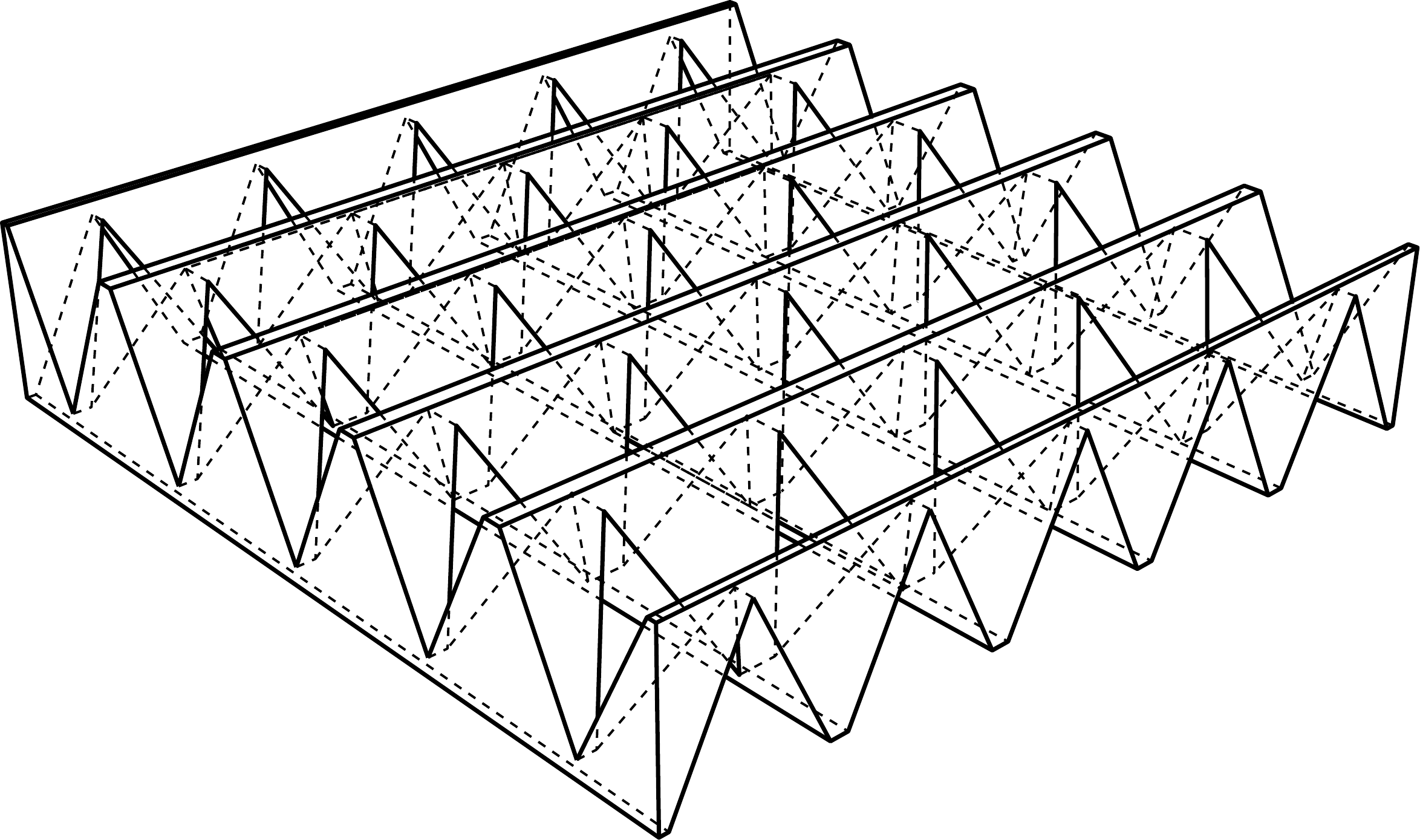}
\hspace{1em}
\includegraphics[scale=0.15,page=1]{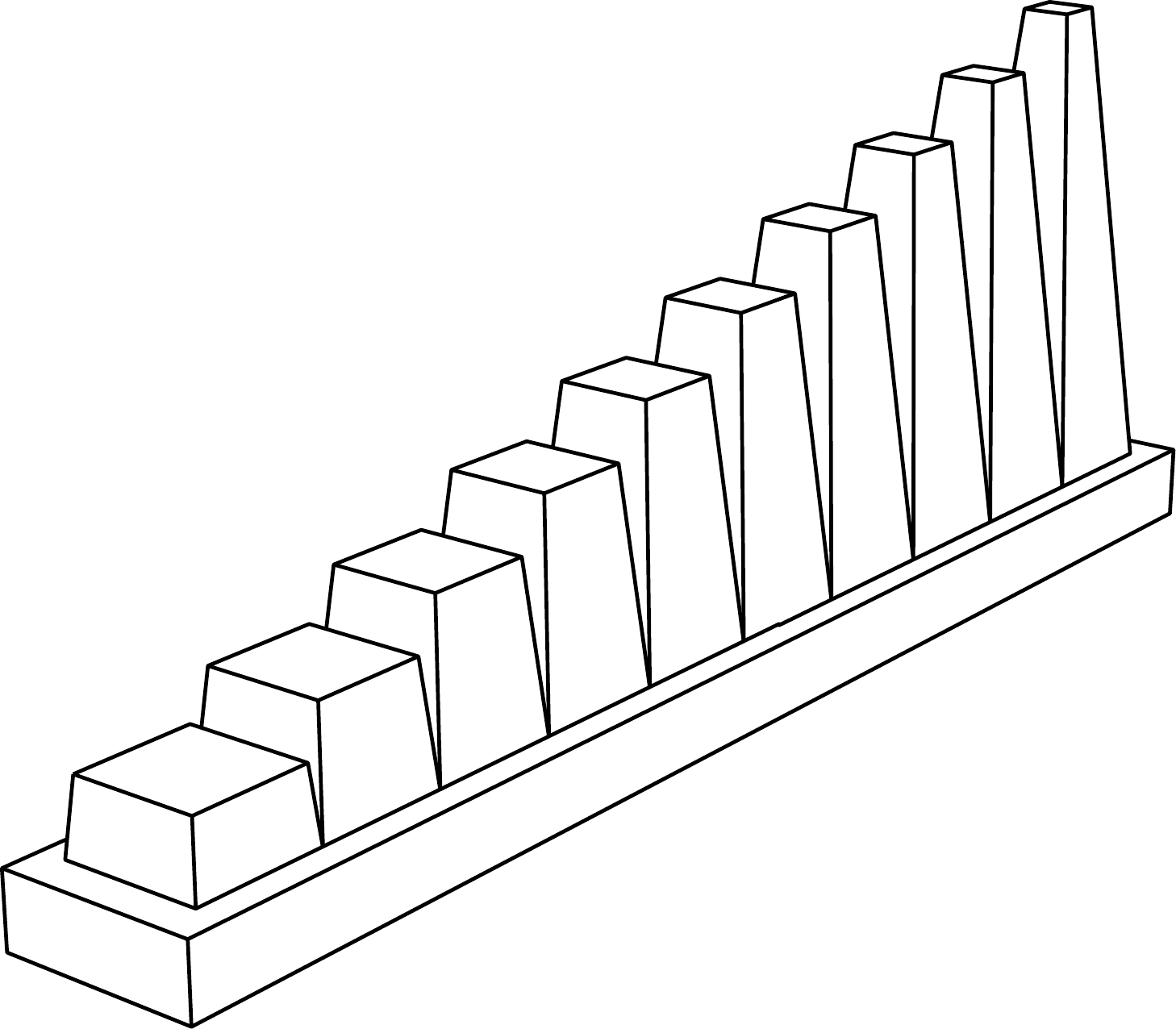}
\caption{Some examples of 3D polytopes that \emph{can} be carved with half-planes.}
\label{fig:3d-carving-example-yes}
\end{figure}

\begin{figure}[ht]
\centering
\includegraphics[scale=0.15,page=1]{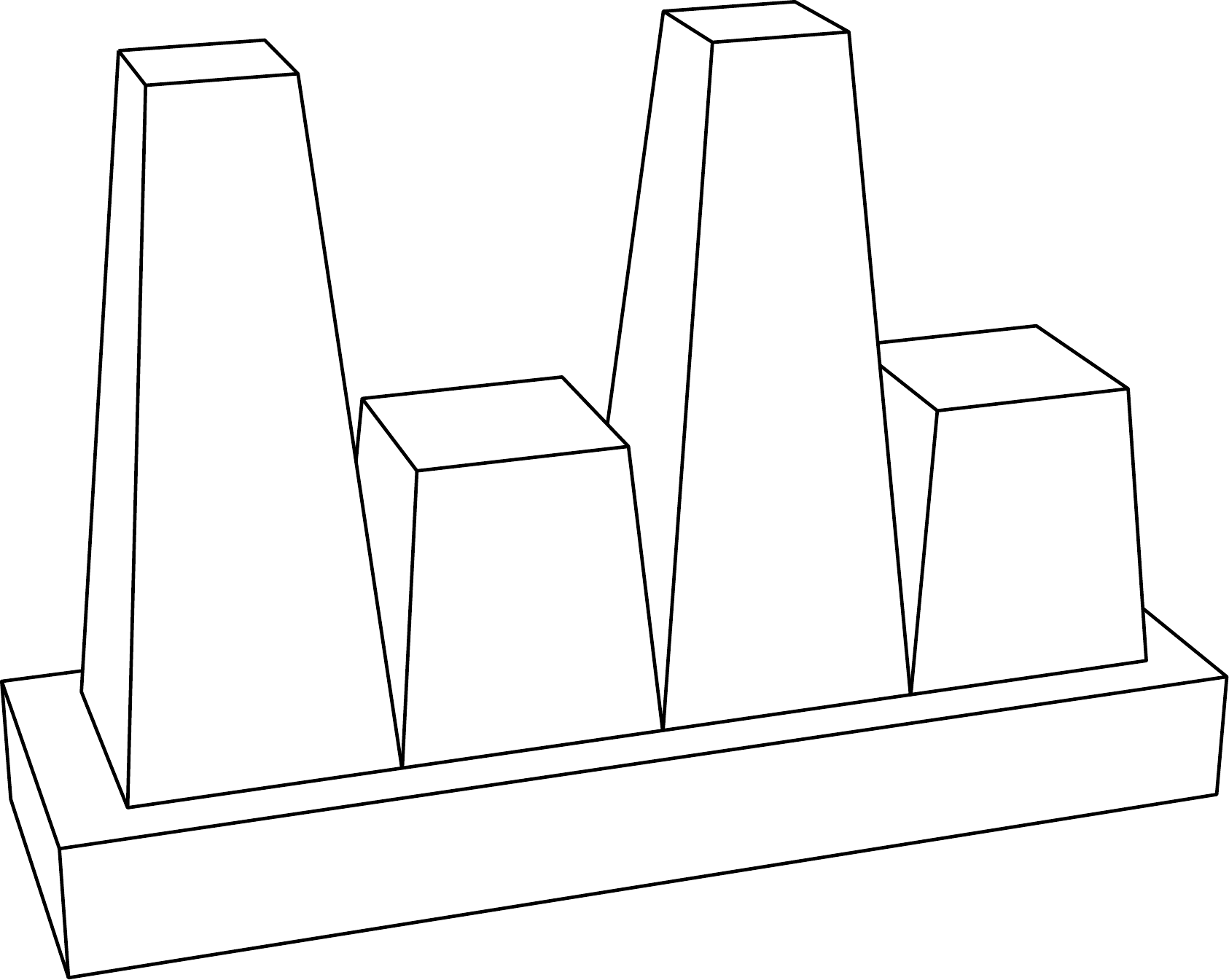}
\includegraphics[scale=0.15,page=1]{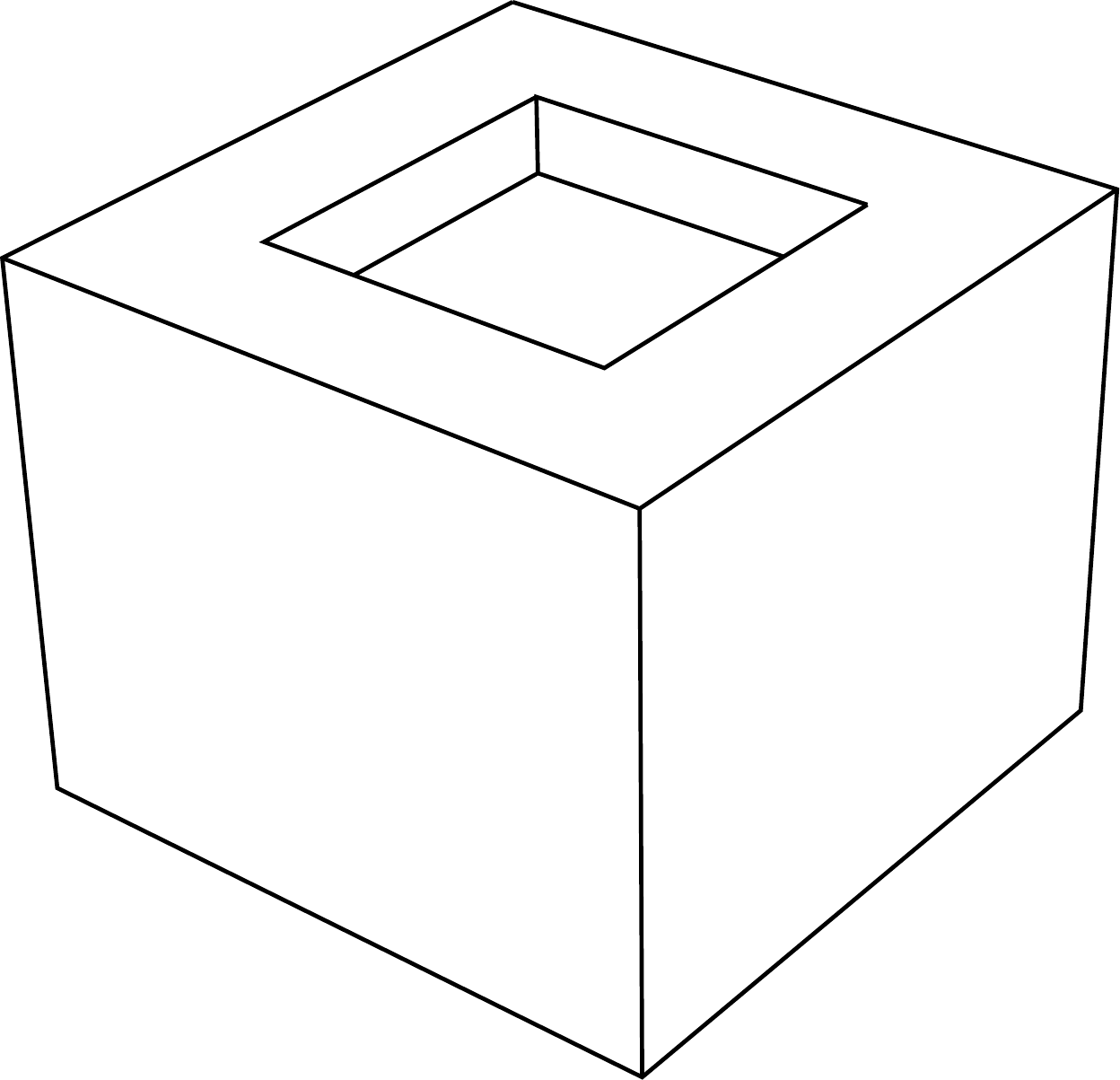}
\caption{Some examples of 3D polytopes that \emph{cannot} be carved with half-planes.}
\label{fig:3d-carving-example-no}
\end{figure}

Importantly, Robson, Spalding-Jamieson, and Zheng proved the following theorem:
\begin{theorem}[{\cite[thm 1]{RobsonSJZ2024}}]
\label{thm:half-plane-cutting-equivalence}
Let $\CH{S}$ denote the convex hull of a set of points $S$.
Let $\Int{P}$ denote the interior of a polytope $P$.
Then, for a triangulated 3D polytope $P$ (that is, each face is triangulated),
the following are equivalent:
\begin{alphaenumerate}
    \item $P$ is carveable with a set of half-planes.
    \item For each facial triangle $T$ in $P$,
        let $L_T$ denote the plane containing $T$.
        There is a single half-plane $H$ containing $T$
        and
        disjoint from
        $\CH{\Int{P} \cap L_T}$
        whose boundary line passes through a vertex of 
        $\CH{\Int{P} \cap L_T}$,
        and also passes through a vertex $v$ of $T$.
\end{alphaenumerate}
\end{theorem}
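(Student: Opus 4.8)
The plan is to prove the two implications separately; the substantive direction is (a) $\Rightarrow$ (b). Throughout I read condition (b) in its intended geometric form: for each facial triangle $T$ there is a half-plane $H$ lying in the plane $L_T$ with $T \subseteq H$, such that $H$ does not cross $\CH{\Int{P}\cap L_T}$ (equivalently, the boundary line of $H$ supports $\CH{\Int{P}\cap L_T}$), and that boundary line passes through a vertex of $\CH{\Int{P}\cap L_T}$ and through a vertex of $T$. Write $K_T = \CH{\Int{P}\cap L_T}$, and take $\mathcal H$ to be a finite set of half-planes (carveability in the minimization sense uses only finitely many cuts).

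For (b) $\Rightarrow$ (a), set $\mathcal H = \{H_T : T \text{ a facial triangle of } P\}$, one half-plane per face as supplied by (b). Since $H_T \subseteq L_T$ and $H_T$ does not cross $K_T \supseteq \Int{P}\cap L_T$, we have $H_T \cap \Int{P} = H_T\cap\Int{P}\cap L_T = \emptyset$, so $\bigcup\mathcal H$ is disjoint from $\Int{P}$, while $\bigcup\mathcal H \supseteq \bigcup_T T = \partial P$. As $\bigcup\mathcal H$ is a finite union of closed half-planes, $\R^3\setminus\bigcup\mathcal H$ is open, its connected components are open, and the boundary of each component is contained in $\bigcup\mathcal H$; hence the component $\mathcal C$ containing the connected set $\Int{P}$ avoids $\partial P\subseteq\bigcup\mathcal H$, and a connected subset of $\R^3$ that contains $\Int{P}$ and is disjoint from $\partial P$ must equal $\Int{P}$. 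So $\mathcal C=\Int{P}$ and $\mathcal H$ carves $P$. (Should the open/closed conventions of \cite{RobsonSJZ2024} make boundary-line contact between $H_T$ and $K_T$ problematic, one may first perturb each $H_T$ infinitesimally away from $K_T$.)

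For (a) $\Rightarrow$ (b), suppose a finite set $\mathcal H$ of half-planes carves $P$, so $\Int{P}$ is a component of $\R^3\setminus\bigcup\mathcal H$, whence $\partial P\subseteq\bigcup\mathcal H$ and every $H\in\mathcal H$ is disjoint from $\Int{P}$. Fix a facial triangle $T$. For a point $x$ in the relative interior of $T$, a two-dimensional neighborhood of $x$ within $L_T$ lies in $\partial P$, hence in $\bigcup\mathcal H$; any $H\in\mathcal H$ not contained in $L_T$ meets $L_T$ in at most a line, so those $H$ contribute only a measure-zero subset of $L_T$, and thus the half-planes of $\mathcal H$ that do lie in $L_T$ — say $H_1,\dots,H_m$ — cover that neighborhood up to measure zero, hence (being closed) cover it, giving $x\in H_1\cup\cdots\cup H_m$. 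Therefore the relative interior of $T$, and by closedness all of $T$, lies in $H_1\cup\cdots\cup H_m$. Each $H_i$ is disjoint from $\Int{P}$, hence from $\Int{P}\cap L_T$; a closed half-plane disjoint from a set is disjoint from that set's convex hull (the set lies in the complementary open half-plane, which is convex), so each $H_i$ is disjoint from $K_T$, and consequently $T$ is disjoint from $K_T$. Now $T$ and $K_T$ are disjoint convex subsets of the plane $L_T$, so by the separating-hyperplane theorem there is a line $\ell$ with $T$ on one closed side and $K_T$ on the other; the half-plane $H$ on the $T$-side contains $T$ and does not cross $K_T$. Finally, normalize $H$: translate $\ell$ toward $K_T$ until it first touches $K_T$ — at a vertex of the polygon $K_T$, tilting slightly if it first meets an edge — then rotate $\ell$ about that vertex toward $T$ until it first meets a vertex of $T$, keeping $T$ on the closed far side throughout. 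This produces the half-plane required by (b).

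The step I expect to be the main obstacle is the covering argument in (a) $\Rightarrow$ (b): showing that the part of the carving that seals a face $T$ may be taken to consist of half-planes coplanar with $L_T$, rather than some intricate family of transversal half-planes whose boundary lines conspire to cover $T$. The key is the dimension count — a transversal half-plane meets the plane $L_T$ in at most a line, hence in a measure-zero subset of $L_T$ — together with the closedness of cuts, which upgrades ``almost all of the face is covered by coplanar cuts'' to ``all of the face is,'' and simultaneously handles the edges and vertices of $T$. The concluding normalization is a routine rotating-calipers-style argument; aside from this, the only care needed is to keep all open/closed conventions consistent with those of \cite{RobsonSJZ2024}.
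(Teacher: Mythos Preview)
The paper does not contain a proof of this theorem. It is quoted from \cite{RobsonSJZ2024} (as Theorem~1 there) and used purely as a black box; the surrounding text explicitly refers the reader to that paper ``for additional details and examples.'' There is therefore nothing in the present paper against which to compare your proposal.

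For what it is worth, your argument is a sound self-contained proof. Two small points of care: in (a)~$\Rightarrow$~(b), after concluding that $T$ and $K_T$ are disjoint, note that $K_T$ need not be closed, so you only get weak (not strict) separation from the separating-hyperplane theorem --- but that is exactly what condition~(b) asks for, and your normalization step is consistent with boundary contact. In (b)~$\Rightarrow$~(a), drop the parenthetical about perturbing $H_T$ away from $K_T$: the vertex-incidence clause of (b) forces $H_T$ to touch $K_T$, and a perturbation would break that; it is also unnecessary, since $\Int{P}\cap L_T$ is relatively open in $L_T$ and hence lies in the \emph{open} complementary half-plane, giving $H_T\cap\Int{P}=\emptyset$ even with the contact.
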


An example of the latter condition can be found in \cref{fig:3d-carving-separator}.
In particular, a set of all half-planes
corresponds to the latter condition
if and only if it corresponds to
a set of carving half-planes.
For additional details and examples, we point readers to the original paper~\cite{RobsonSJZ2024}.

\begin{figure}[ht]
\centering
\includegraphics[scale=0.15,page=1]{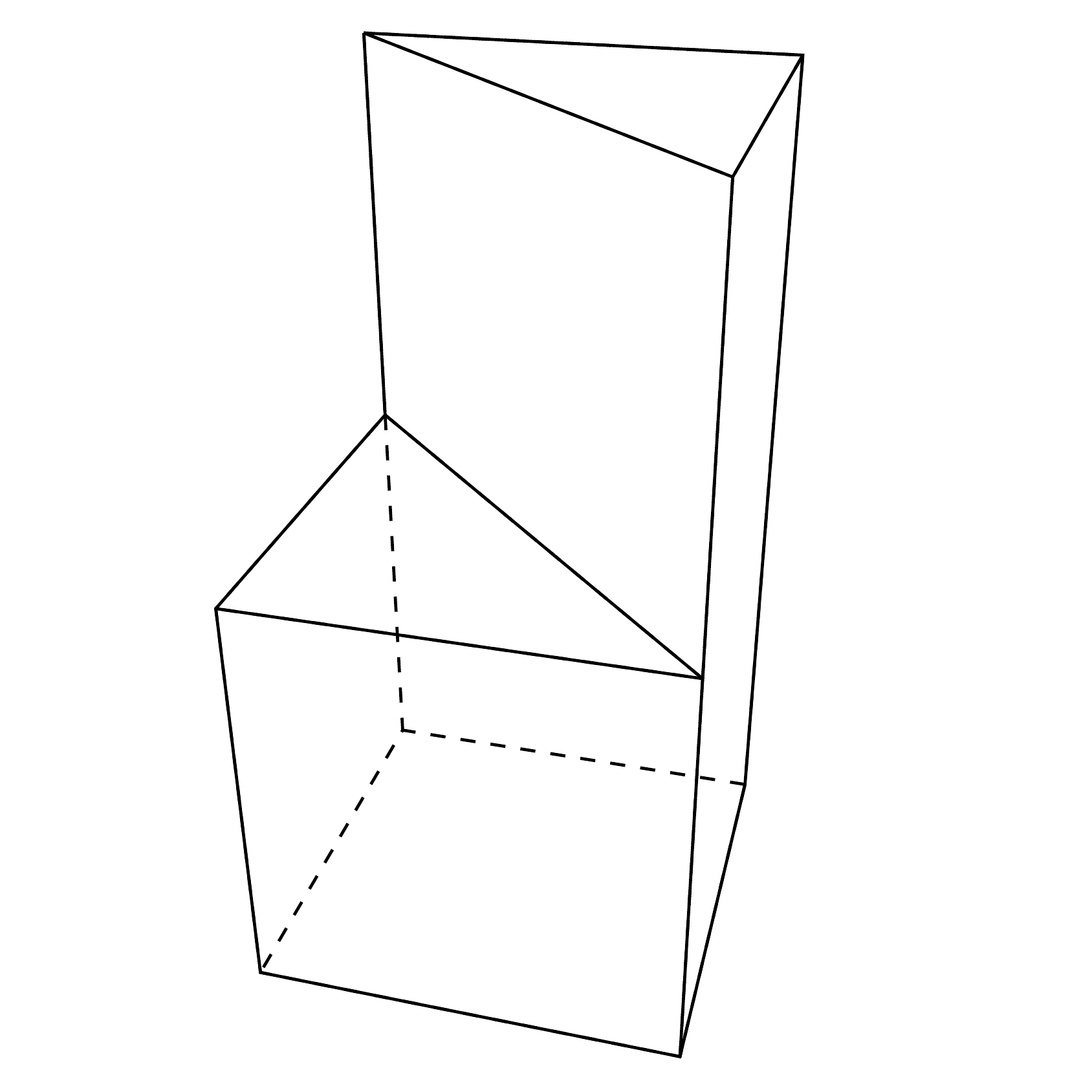}
\hspace{1em}
\includegraphics[scale=0.15,page=4]{saws/half-plane-cut-example}
\caption{An example of a half-plane cut for a particular face.}
\label{fig:3d-carving-separator}
\end{figure}

That said, while such a set of half-planes has size $n$ (for $n$ triangular faces of the polytope),
and can be found in polynomial time via the methods of the original paper,
it does not necessarily result in the fewest number of cuts.
In particular, it may be possible to replace a set of co-planar half-planes with a smaller set.
In fact, this is the only case we must consider: For a face to be carved, there must be a set of co-planar half-planes covering the face.
Hence, this allows us to reduce the optimization variant to the two-dimensional Polygon Separating problem:
\begin{lemma}
Let $P$ be a polytope with triangulated faces
that can be carved with half-planes.
Partition the triangular faces by coplanarity.
For each partition containing a set of triangles $\mathcal T$ along a plane $L_{\mathcal T}$,
let $Q_{\mathcal T}$ be the polygon formed by $\Int{P}\cap L_{\mathcal T}$,
and let $S_{\mathcal T}$ be the set of edges of triangles in $\mathcal T$.
Solve the Polygon Separating problem
on each $Q_{\mathcal T},S_{\mathcal T}$.
Then, the resulting set of half-planes also forms a minimum set of half-planes carving $P$.
\end{lemma}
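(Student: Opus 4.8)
The plan is to prove the lemma by matching a lower and an upper bound, both routed through a local equivalence inside each plane spanned by a coplanar group of faces. Throughout, call such a plane $L_{\mathcal T}$ a \emph{face plane} (for the group $\mathcal T$), and set $Q_{\mathcal T}=\Int P\cap L_{\mathcal T}$, which by \cref{thm:half-plane-cutting-equivalence} is a nonempty, relatively open polygonal region of $L_{\mathcal T}$ (so $\CH{Q_{\mathcal T}}$ is an honest convex polygon). The key structural fact is that a set of half-planes carves $P$ exactly when it covers $\partial P$ and is disjoint from $\Int P$: necessity is immediate, and sufficiency follows from a short topological argument — $\Int P$ is open, connected, and disjoint from the union of the cuts, so it lies in a single component $U$ of the complement; if $U$ were strictly larger it would meet the exterior of $P$, forcing a path in $U$ to cross $\partial P$, contradicting that $\partial P$ is covered. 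Next, since a half-plane not coplanar with a (triangulated, $2$-dimensional) face meets that face in at most a segment, a finite family of non-coplanar half-planes is nowhere dense on any union of faces; hence, in any carving, the half-planes lying in a given face plane $L_{\mathcal T}$ must by themselves cover $\bigcup\mathcal T$ (their closed union contains a dense subset of it), and they of course avoid $Q_{\mathcal T}$, hence all of $\Int P$. So a carving decomposes plane by plane, and no half-plane serves two face planes.

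This reduces everything to a planar question, one per face plane: the minimum number of half-planes in $L_{\mathcal T}$ that together cover $\bigcup\mathcal T$ while each avoiding $Q_{\mathcal T}$. The complement in $L_{\mathcal T}$ of such a family is the intersection of the corresponding open half-planes, an open convex region $\mathcal C$ containing $Q_{\mathcal T}$ and missing $\bigcup\mathcal T$; its closure is a convex polygon containing $\CH{Q_{\mathcal T}}$ whose interior avoids $\bigcup\mathcal T$, and conversely, extending each side of such a polygon to the half-plane on the side away from $Q_{\mathcal T}$ recovers a covering family of the same cardinality as the number of sides. Moreover ``interior disjoint from $\bigcup\mathcal T$'' is equivalent to ``interior disjoint from the edge set $S_{\mathcal T}$'': one direction is trivial, and for the other, if the interior met some $T\in\mathcal T$, then being connected and disjoint from $T$'s edges it would lie inside $T$, contradicting that it also contains $Q_{\mathcal T}\subseteq\Int P$ while $T\subseteq\partial P$. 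Hence the planar question is exactly \SegCPS on $(Q_{\mathcal T},S_{\mathcal T})$ (with $\CH{Q_{\mathcal T}}$ as the inner convex polygon), which is in \P\ by \cref{thm:P-seg-s} and the equivalences of \cref{sec:sep}.

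Combining: for the lower bound, restricting an arbitrary carving to each face plane gives, by the equivalence, a feasible \SegCPS solution, so the carving has size at least $\sum_{\mathcal T}\mathrm{opt}(\mathcal T)$; for the upper bound, taking an optimal \SegCPS polygon in each face plane and converting its sides to half-planes yields a family that covers every face and avoids $\Int P$, hence carves $P$ by the topological fact above, with total size $\sum_{\mathcal T}\mathrm{opt}(\mathcal T)$. As there are only polynomially many face planes and each subproblem is solved in polynomial time, the whole procedure is polynomial and its output is a minimum carving.

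The step I expect to be the main obstacle is pinning down the planar equivalence in its one delicate case: when the faces of $\mathcal T$ do not fully surround $Q_{\mathcal T}$ inside $L_{\mathcal T}$ — which genuinely occurs for carveable polytopes, e.g.\ where a tunnel passes through the plane so that part of $\partial Q_{\mathcal T}$ is supplied by slices of \emph{non}-coplanar faces that are not in $S_{\mathcal T}$. There the minimal covering family corresponds to an \emph{unbounded} convex region (no half-plane is needed facing the ``open'' directions), so the notion of \SegCPS solution used here must be read as permitting unbounded separating regions (equivalently, families of half-plane cuts that need not close up), and one must verify this reading keeps \SegCPS polynomial-time solvable — via a branch-cut/completeness adjustment to the arc-cover formulation of \cref{sec:sep}, or by showing that capping an unbounded region never reduces the number of half-planes. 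Getting this bookkeeping exactly right, together with confirming via \cref{thm:half-plane-cutting-equivalence} that no face plane yields a degenerate (empty or lower-dimensional) $Q_{\mathcal T}$, is where the actual work lies; the remaining steps are routine.
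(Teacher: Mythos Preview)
The paper gives no proof of this lemma; it is stated immediately after the one-sentence justification ``For a face to be carved, there must be a set of co-planar half-planes covering the face'' and then simply invoked. Your proposal is that sentence made rigorous: you establish that carving is equivalent to covering $\partial P$ while avoiding $\Int P$, observe that half-planes not coplanar with a face contribute only a nowhere-dense set to it so the coplanar ones must already cover, translate the resulting per-plane problem into \SegCPS, and match lower and upper bounds by summing over face planes. That is the same route the paper takes implicitly, just with the details filled in.

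The obstacle you isolate is genuine and represents more care than the paper itself takes. When $Q_{\mathcal T}=\emptyset$---which happens for \emph{every} face plane of a convex polytope---or when the triangles in $\mathcal T$ do not surround $Q_{\mathcal T}$, the minimum per-plane cover corresponds to an unbounded convex region (possibly a single half-plane), whereas \SegCPS as formulated in \cref{sec:sep} asks for a bounded convex polygon; the lemma as written glosses over this mismatch. One place your plan needs revision: your appeal to \cref{thm:half-plane-cutting-equivalence} to conclude that $Q_{\mathcal T}$ is nonempty does not hold in general (again, take $P$ convex), so the degenerate case cannot be dismissed that way and must be handled directly---e.g., by reading ``separating polygon'' as ``intersection of finitely many closed half-planes'' throughout, which the \AAC reduction in \cref{sec:sep} can be made to accommodate with the branch-cut adjustment you already anticipate.
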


This is enough to obtain a polynomial-time solution to the optimization problem.
That is, applying \cref{thm:P-seg-s} and \cref{thm:composition_lrf} proves \cref{thm:P-3d}.

\bibliography{citations}

\end{document}